\numberwithin{equation}{section}       
\numberwithin{figure}{section}       
\theoremstyle{plain}
\newtheorem{prop}{Proposition}[section]
\newtheorem{prob}[prop]{Problem}
\newtheorem{coro}[prop]{Corollary}
\newtheorem{lemm}[prop]{Lemma}
\newtheorem{theoalph}{Theorem}
\theoremstyle{definition}
\theoremstyle{remark}
\newtheorem{rema}[prop]{Remark}
\newtheoremstyle{citing}
  {3pt}
  {3pt}
  {\itshape}
  {}
  {\bfseries}
  {.}
  {.5em}
  {\thmnote{#3}}
\theoremstyle{citing}
\newtheorem*{generic}{}
\newcommand{\partn}[1]{{\smallskip \noindent \textbf{#1.}}}
\DeclareMathAlphabet{\mathpzc}{OT1}{pzc}{m}{it} 
\newcommand{\N}{\mathbb{N}}
\newcommand{\R}{\mathbb{R}}
\newcommand{\T}{\mathbb{T}}
\newcommand{\Z}{\mathbb{Z}}
\newcommand{\sM}{\mathscr{M}}
\newcommand{\sP}{\mathscr{P}}
\newcommand{\hX}{\widehat{X}}
\newcommand{\hbeta}{\widehat{\beta}}
\newcommand{\hmu}{\widehat{\mu}}
\newcommand{\tX}{\widetilde{X}}
\newcommand{\tDelta}{\widetilde{\Delta}}
\newcommand{\tvarphi}{\widetilde{\varphi}}
\newcommand{\tPhi}{\widetilde{\Phi}}
\newcommand{\dd}{\hspace{1pt}\operatorname{d}\hspace{-1pt}}
\renewcommand{\=}{ : = }
\DeclareMathOperator{\Lip}{Lip} 
\DeclareMathOperator{\dist}{dist}
\newcommand{\htop}{h_{\operatorname{top}}}
\newcommand{\1}{\mathds{1}}
\DeclareMathOperator{\Leb}{Leb}
\renewcommand{\circle}{\T}
\newcommand{\utheta}{\underline{\theta}}
\newcommand{\signs}{\{ +, - \}^{\N}}
\newcommand{\uvarsigma}{\underline{\varsigma}}
\newcommand{\wtuvarsigma}{\widetilde{\uvarsigma}}
\newcommand{\whm}{\widehat{m}}
\newcommand{\whn}{\widehat{n}}
\newcommand{\Umax}{U_{\max}}
\newcommand{\ellmax}{\ell_{\max}}
\begin{document}

\title[Sensitive dependence of Gibbs measures at low temperatures]{Sensitive dependence of Gibbs measures \\ at low temperatures}
\author{Daniel Coronel}
\address{Daniel Coronel, Departamento de Matem{\'a}ticas, Universidad Andres Bello, Avenida Rep\'ublica~220, Santiago, Chile}
\email{alvaro.coronel@unab.cl}
\author{Juan Rivera-Letelier}
\address{Juan Rivera-Letelier, Facultad de Matem{\'a}ticas, Pontifica Universidad Cat{\'o}lica de Chile, Avenida Vicu{\~n}a Mackenna~4860, Santiago, Chile}
\email{riveraletelier@mat.puc.cl}

\begin{abstract}
The Gibbs measures of an interaction can behave chaotically as the temperature drops to zero.
We observe that for some classical lattice systems there are
interactions exhibiting a related phenomenon of sensitive dependence of Gibbs measures: An
arbitrarily small perturbation of the interaction can produce
significant changes in the low-temperature behavior of its Gibbs measures.

For some one-dimensional $XY$~models we exhibit sensitive dependence of Gibbs
measures for a (nearest-neighbor) interaction given by a smooth function, and for
perturbations that are small in the smooth category.
We also exhibit sensitive dependence of Gibbs measures for an interaction on a classical lattice system with
finite-state space.
This interaction decreases exponentially as a function of the distance
between sites; it is given by a Lipschitz continuous potential in the configuration space.
The perturbations are small in the Lipschitz topology.
As a by-product we solve some problems stated by Chazottes and Hochman.
\end{abstract}

\maketitle

%
%

\section{Introduction}
The Gibbs measures of an interaction can behave chaotically as the temperature drops to zero.
This phenomenon was first exhibited by van~Enter and Ruszel for $N$\nobreakdash-vector models~\cite{vEnRus07}, and later by Chazottes
and Hochman for a classical lattice system with finite-state
space~\cite{ChaHoc10}.
More recently, chaotic temperature dependence was exhibited by the authors for a
quasi-quadratic map and its geometric potential~\cite{CorRivsensitive}.

In~\cite{CorRivsensitive}, a related phenomenon of ``sensitive
dependence of geometric Gibbs measures'' was found: An arbitrarily small
perturbation of the map can produce significant changes in the
low-temperature behavior of its geometric Gibbs measures.
The purpose of this paper is to show that this phenomenon is also present
in some one-dimensional $2$-vector models,\footnote{We thank Aernout van Enter for
  pointing out that the proof of the sensitive dependence of Gibbs
  measures for one-dimensional $2$-vector models given here extends to
  $N$-vector models of arbitrary dimension.} as well as in some classical lattice systems
of arbitrary dimension and finite-state space.
Since the methods used here, based partially on the hats-in-hats idea of~\cite{vEnRus07}, differ significantly from those
of~\cite{CorRivsensitive}, this gives evidence that the sensitive
dependence of Gibbs measures phenomenon is robust; it does not depend
on the particulars of either setting.

Roughly speaking, for a given interaction ``chaotic temperature dependence'' is the non-convergence of
Gibbs measures along a certain sequence of temperatures going to zero.
This concept first arose in the spin-glass literature, where the interactions contain disorder, see for example~\cite{NewSte03a,NewSte07a}.
In contrast, in~\cite{ChaHoc10,vEnRus07,CorRivsensitive} and in this note the interactions are
deterministic and contain no disorder.

In the chaotic temperature dependence, the non-convergence of Gibbs measures
cannot occur for every sequence of temperatures going to zero, due to the compactness of the space of probability measures.
In rough terms, the phenomenon of sensitive dependence of Gibbs
measures exhibited here, is that the non-convergence can indeed occur along any
prescribed sequence of temperatures going to zero, by making an
arbitrarily small perturbation of the original interaction.

To be more precise, we introduce the following terminology.
An interaction~$\Phi$ is \emph{chaotic}, if there is a sequence of
inverse temperatures~$(\beta_\ell)_{\ell \in \N}$ such that~$\beta_\ell \to +
\infty$ as~$\ell \to + \infty$, and such that the following property
holds: If for each~$\ell$ in~$\N$ we choose an arbitrary Gibbs
measure~$\rho_{\ell}$ for the interaction~$\beta_{\ell} \cdot \Phi$,
then the sequence~$(\rho_\ell)_{\ell \in \N}$ does not
converge.\footnote{In this definition, it is crucial that the non-convergence holds for an
arbitrary choice of Gibbs measure for each~$\ell$.
For example, for the $2$-dimensional Ising model and an arbitrary
sequence of inverse temperatures going to infinity, there are some choices
of Gibbs measures for which we have non-convergence, like choosing the ferromagnetic and the
antiferromagnetic phases in an alternate way.
However, there is no sensitive dependence because there
are some choices for which we do have convergence, like choosing
the ferromagnetic phase for each~$\ell$.
See also~\cite{0Hocwebsitenote} for further remarks and clarifications about
this phenomenon.}
That is, an interaction is chaotic if it displays chaotic temperature
dependence in the sense of van Enter and Ruszel~\cite{vEnRus07}.

We also say an interaction~$\Phi$ is \emph{sensitive}, if for
every sequence of inverse temperatures~$(\beta_{\ell})_{\ell \in \N}$ such that~$\beta_\ell \to +
\infty$ as~$\ell \to + \infty$ there is an arbitrarily small
perturbation~$\tPhi$ of~$\Phi$\footnote{The notion of proximity
  between interactions depends on the setting.
For one-dimensional $XY$~models we consider the space of symmetric
nearest-neighbor interactions.
Such an interaction is determined by a function on the circle, as in
the classical $XY$~model.
In this setting we use the smooth topology on the space of smooth
functions on the circle as a notion of proximity between interactions,
see~\S\ref{ss:XY model}.
In the case of a classical lattice system with finite-state space, the
interactions we consider are determined by a potential defined on the
configuration space.
In this setting we use the Lipschitz topology, see~\S\ref{ss:symbolic}.} such that the following property holds:
If for each~$\ell$ in~$\N$ we choose an arbitrary Gibbs
measure~$\rho_{\ell}$ for the interaction~$\beta_{\ell} \cdot \tPhi$,
then the sequence~$(\rho_\ell)_{\ell \in \N}$ does not converge.
Note that in the definition of chaotic interaction the sequence of
inverse temperatures~$(\beta_{\ell})_{\ell \in \N}$ is not arbitrary;
in fact, by the compactness of the space of probability measures we
can always choose a sequence of inverse temperatures and corresponding
Gibbs measures for which we have convergence.
In contrast, in the definition of sensitive interaction the sequence
of inverse temperatures~$(\beta_{\ell})_{\ell \in \N}$ is arbitrary,
but the non-convergence is for Gibbs measures of a perturbation of the
original interaction.\footnote{Note that in the definition of sensitive interaction there is no
assertion about the low-temperature behavior of Gibbs measures of the
original interaction.
In fact, we show that there are sensitive interactions
that are chaotic, and that there are sensitive interactions that are
non-chaotic.}

For one-dimensional $2$-vector (or~$XY$) models, we exhibit a sensitive interaction by modifying the example of van~Enter and
Ruszel, see~\S\ref{ss:XY model}.
In our modification of their example, the (nearest-neighbor)
interaction is determined by a smooth function defined on the circle and the perturbations are
small in the smooth topology.
We show this is in a certain sense best possible: In the analytic
category there is no chaotic interaction, and therefore no sensitive interaction.

In the case of a classical lattice system with finite-state space, we
exhibit a sensitive interaction that decays exponentially as a
function of the distance between sites: The interaction is given by a Lipschitz
continuous potential on the configuration space, and the perturbations
are small in the Lipschitz topology,
see~\S\ref{ss:symbolic}.
We use a new construction that is very flexible and that allows us to solve some of the
problems stated by Chazottes and Hochman
in~\cite{ChaHoc10}.\footnote{It is also possible to modify the
  construction of Chazottes and Hochman in~\cite{ChaHoc10} to exhibit
  a sensitive interaction.
The construction introduced here is more qualitative in nature, and
somewhat simpler.}

\subsection{One-dimensional \texorpdfstring{$XY$}{XY} models}
\label{ss:XY model}
Denote the circle by~$\circle \= \R / \Z$, endowed with the (additive)
group structure inherited from~$\R$.
Given a function~$U \colon \circle \to \R$, consider the nearest-neighbor
interaction~$\Phi_U$ on~$\circle^{\Z}$ defined by
$$ \Phi_U(\{ k, k + 1 \}) \left( \left( \theta_n \right)_{n \in \Z} \right)
\=
- U(\theta_k - \theta_{k + 1}). $$
When~$U$ is continuous there is a unique Gibbs
measure for the interaction~$\Phi_U$, and this measure is translation
invariant, see for example~\cite[Theorem~III$.8.2$]{Sim93} or Lemma~\ref{l:Gibbs measures}.
Denote this measure by~$\rho_U$.

A configuration~$\left( \theta_n \right)_{n \in \Z}$ in~$\circle^{\Z}$
is \emph{ferromagnetic} (resp.~\emph{antiferromagnetic}), if for every~$n$
we have~$\theta_{n + 1} = \theta_n$ (resp.~$\theta_{n + 1} = \theta_n
+ \tfrac{1}{2}$).
Note that a ferromagnetic (resp. antiferromagnetic) configuration is completely
determined by its value at the site~$n = 0$.
The \emph{ferromagnetic} (resp.~\emph{antiferromagnetic}) \emph{phase}
is the measure on~$\circle^{\Z}$ that is evenly distributed on ferromagnetic
(resp. antiferromagnetic) configurations.
\begin{theoalph}[Sensitive dependence of Gibbs measures on the interaction]
\label{t:XY model sensitive dependence}
There is a smooth function~$U_0 \colon \circle \to \R$ such that for every
sequence of positive numbers~$(\hbeta_\ell)_{\ell \in \N}$
satisfying~$\hbeta_{\ell} \to + \infty$ as~$\ell \to + \infty$, the
following property holds: There is an arbitrarily small smooth perturbation~$U$
of~$U_0$ such that the sequence of Gibbs measures~$\left(
  \rho_{\hbeta_\ell \cdot U} \right)_{\ell \in \N}$ accumulates at the
same time on the ferromagnetic and the antiferromagnetic phases.
\end{theoalph}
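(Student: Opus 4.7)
The plan is to extend the hats-in-hats construction of van~Enter and Ruszel~\cite{vEnRus07} so that the scales at which the Gibbs measures accumulate alternately on the ferromagnetic and antiferromagnetic phases can be freely tuned by an arbitrarily small smooth perturbation. I would first reduce the problem to the asymptotics of a ratio of weights via the transfer operator, then exhibit a symmetric infinitely flat base function $U_0$, and finally build a sequence of tiny perturbations of $U_0$ at scales matched to $(\hbeta_\ell)$.

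For a nearest-neighbor interaction $\Phi_U$ with $U$ continuous, $\rho_{\beta U}$ is the unique translation-invariant Gibbs measure and its two-point marginal has density proportional to $h(\theta_0)\,e^{\beta U(\theta_0-\theta_1)}\,h(\theta_1)$, where $h$ is the positive eigenfunction of the transfer operator
\[
L_{\beta U} f(\theta) = \int_{\circle} e^{\beta U(\theta-\phi)} f(\phi)\, d\phi.
\]
When $U$ is smooth with two local maxima of equal height at $0$ and $1/2$, a Laplace-type analysis of $L_{\beta U}$ shows that for large $\beta$ the eigenfunction $h$ is asymptotically constant, and $\rho_{\beta U}$ is close to the Markov measure on $\circle^{\Z}$ whose successive differences $\theta_{n+1}-\theta_n$ are i.i.d., taking the values $0$ and $1/2$ with probabilities proportional to the weights
\[
W_0(\beta, U) := \int_{|\theta|<\delta} e^{\beta U(\theta)}\, d\theta \quad\text{and}\quad W_{1/2}(\beta, U) := \int_{|\theta-1/2|<\delta} e^{\beta U(\theta)}\, d\theta
\]
for any fixed small $\delta>0$. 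Consequently, $\rho_{\hbeta_\ell U}$ accumulates on the ferromagnetic phase along sub-indices where $W_0/W_{1/2} \to \infty$ and on the antiferromagnetic phase along sub-indices where $W_0/W_{1/2} \to 0$.

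Next, choose $U_0$ smooth, even, and invariant under $\theta\mapsto\theta+1/2$, with a single orbit of maxima at $\{0, 1/2\}$ at which $U_0$ is infinitely flat (all derivatives vanish there). The extra invariance forces $W_0(\beta, U_0) = W_{1/2}(\beta, U_0)$ for every $\beta$, so the unperturbed ratio is identically~$1$ and every imbalance of the perturbed Gibbs measures will come from the perturbation alone. Given the sequence $(\hbeta_\ell)$, pass to a sufficiently rapidly growing subsequence (allowed since only accumulation is required) and set $U := U_0 + \sum_\ell \psi_\ell$, where each $\psi_\ell$ is a nonnegative smooth bump of width $\epsilon_\ell$ and amplitude $A_\ell$, supported near $0$ for even $\ell$ and near $1/2$ for odd $\ell$. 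The parameters $(\epsilon_\ell, A_\ell)$ are chosen so that (i) $\hbeta_\ell A_\ell$ is large, so that at $\beta = \hbeta_\ell$ the bump $\psi_\ell$ multiplies the relevant weight by an unbounded factor; (ii) $\hbeta_{\ell'} A_\ell$ is very small for $\ell' < \ell$, so that $\psi_\ell$ does not spoil the ratio at earlier steps; and (iii) $A_\ell/\epsilon_\ell^k \to 0$ for every $k$, ensuring $\sum_\ell \psi_\ell$ is a $C^\infty$-small perturbation of $U_0$. The infinite flatness of $U_0$ at its maxima makes these three conditions compatible (for example, along a rapidly growing subsequence with $\epsilon_\ell \sim 1/\log\hbeta_\ell$ and $A_\ell \sim (\log\hbeta_\ell)/\hbeta_\ell$).

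The principal obstacle is that perturbations $\psi_{\ell'}$ with $\ell' < \ell$ are also amplified at $\beta = \hbeta_\ell$ by factors $e^{\hbeta_\ell A_{\ell'}}$, and these accumulated earlier contributions must be dwarfed by $\psi_\ell$'s own contribution so that the desired side of the balance wins at $\hbeta_\ell$. This forces an inductive construction: at step $\ell$ one chooses $A_\ell$ large enough to overwhelm all previous contributions to the ratio at $\hbeta_\ell$, while keeping $\epsilon_\ell$ small enough that the $C^k$ norm $A_\ell/\epsilon_\ell^k$ stays negligible for every fixed $k$. Passing to a subsequence of $(\hbeta_\ell)$ that grows quickly enough creates the room needed to keep all three conditions simultaneously satisfied. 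Since the supports of the $\psi_\ell$ alternate between neighborhoods of $0$ and $1/2$, the ratio $W_0/W_{1/2}$ at $\hbeta_\ell$ oscillates between values tending to $\infty$ and to $0$, so $(\rho_{\hbeta_\ell U})_{\ell\in\N}$ accumulates simultaneously on the ferromagnetic and the antiferromagnetic phases.
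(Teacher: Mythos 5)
Your reduction to the single-bond marginal is sound (in fact more is true than you claim: since the kernel $e^{\beta U(\theta-\phi)}$ is a difference kernel, the constant function is an \emph{exact} eigenfunction, so the differences $\theta_n-\theta_{n+1}$ are exactly i.i.d.\ with density $e^{\beta U}/Z(\beta)$ --- this is the paper's Lemma~\ref{l:Gibbs measures}), and tailoring the perturbation to a rapidly growing subsequence of $(\hbeta_\ell)$ is a legitimate alternative to the paper's strategy of proving estimates uniformly over whole intervals $[\beta_m,\beta_{\whm}]$ of inverse temperatures. The genuine gap is in the parameter scheme for the bumps, and it is not a matter of bookkeeping you left to the reader: your conditions are mutually inconsistent. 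At $\beta=\hbeta_\ell$ the wrong-side bump $\psi_{\ell'}$, $\ell'<\ell$, contributes weight of order $\epsilon_{\ell'}e^{\hbeta_\ell(U_0(0)+A_{\ell'})}$, so for the desired side to win you need (up to logarithmic-in-$\epsilon$ corrections) $A_\ell\ge A_{\ell'}$ for \emph{every} earlier opposite-side index; in particular the amplitudes cannot decay. But condition~(ii) demands $A_\ell\ll 1/\hbeta_{\ell-1}$ while condition~(i) at step $\ell-1$ gives $A_{\ell-1}\gg 1/\hbeta_{\ell-1}$, so (i), (ii) and the ``overwhelm all previous contributions'' requirement you state in your obstacle paragraph cannot hold simultaneously. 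Your sample choice $A_\ell\sim(\log\hbeta_\ell)/\hbeta_\ell$ makes the amplitudes decrease, and then at every large $\hbeta_\ell$ the exponentially amplified \emph{first} bump dominates both sides by a factor $e^{\hbeta_\ell(A_1-A_\ell)}$, so the marginal concentrates near the site of $\psi_1$ for all large $\ell$ and no alternation occurs. Conversely, if you let the amplitudes be (essentially) non-decreasing to satisfy the overwhelm requirement, then either the bumps near a given site are disjointly supported, in which case $\|\sum_\ell\psi_\ell\|_{C^1}\ge\sup_\ell A_\ell/\epsilon_\ell=+\infty$ and the perturbation is not even $C^1$, let alone $C^\infty$-small (violating (iii)); or they overlap, in which case the heights add and diverge. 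So the additive, independently-scaled bump architecture cannot deliver what you need.

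The paper resolves exactly this tension with a different architecture: a nested, telescoping family $U(\uvarsigma)=-\beta_0^{-1}+\sum_m(\beta_{m-1}^{-1}-\beta_m^{-1})\chi_m^{\varsigma(m)}$, where the $m$-th term is a \emph{pair} of bumps, one on the interval at scale $m-1$ around the currently active site and one on the smaller interval at scale $m$ around the opposite site. The quantity that competes at inverse temperature $\beta\approx\beta_m$ is then the cumulative height of the staircase down to scale $m$, which automatically dominates the cumulative height at the opposite site, while the height increments ($\sim 2^{-(m+9)^3}$) decay so much faster than the reciprocal widths ($2^{(m+11)^2}$) grow that every $C^r$ norm of the tail is summable; this gives both smoothness and continuity of $\uvarsigma\mapsto U(\uvarsigma)$, and the smallness of the perturbation comes from changing only the tail of the sign sequence. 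The explicit Laplace-type comparisons are then carried out for all $\beta$ in $[\beta_m,\beta_{\whm}]$, which is what lets one handle an arbitrary prescribed sequence $(\hbeta_\ell)$ merely by choosing the signs. If you want to salvage your additive scheme, you would have to let the amplitudes increase by carefully tuned, summable increments and protect the earlier temperatures through width ratios (verifying $\hbeta_{\ell'}(A_\ell-A_{\ell'})\ll\log(\epsilon_{\ell'}/\epsilon_\ell)$ for all $\ell>\ell'$, summed over the later bumps), at which point you have essentially rederived the paper's nested construction.
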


Using the terminology introduced above, this theorem implies that the
interaction~$\Phi_{U_0}$ is sensitive.
We show~$U_0$ can be chosen so that in addition the
interaction~$\Phi_{U_0}$ is chaotic (resp. non-chaotic), see Remark~\ref{r:sensitive potential}.
More precisely, we show that the function~$U_0$ can be chosen so that
the family of Gibbs measures~$(\rho_{\beta \cdot U_0})_{\beta > 0}$
converges to either the
ferromagnetic, or the antiferromagnetic phase as~$\beta \to + \infty$; it can also be
chosen so that this family of Gibbs measures accumulates at the same time on the ferromagnetic and antiferromagnetic phases as~$\beta \to + \infty$.

The first example of a chaotic interaction was given by van~Enter and Ruszel
in~\cite{vEnRus07} using a discontinuous function~$U$, see also~\cite[\S$6$]{BarCioLopMohSou11}.
We use a modification of their example that allows us to get
smooth functions.
The smooth regularity is essentially optimal: In the real analytic
category there are no chaotic interactions, and therefore no sensitive
ones.
In fact, for every real analytic function~$U \colon \circle \to
\R$, the one-parameter family of Gibbs measures~$(\rho_{\beta \cdot U})_{\beta > 0}$ converges as~$\beta \to + \infty$, see Remark~\ref{r:analytic selection}.
See also~\cite{LopMohSouThi09,BarCioLopMohSou11,LopMenMohSou1210,LopMen14}
for other results on the behavior of Gibbs measures as temperature drops to zero.

The following is our main technical result, from which Theorem~\ref{t:XY model sensitive dependence} follows easily.
Throughout this note we endow~$\{ +, - \}$ with the discrete topology,
and~$\signs$ with the corresponding product topology.
Denote by~$\pi \colon \circle^{\Z} \to \circle$ the projection
defined by
$$ \pi\left( \left( \theta_n \right)_{n \in \Z} \right)
\=
\theta_0 - \theta_1. $$

\begin{generic}[Main Lemma~\ref{t:XY model sensitive dependence}]
There is a family of smooth functions~$\left(  U(\uvarsigma)
\right)_{\uvarsigma \in \signs}$ that is continuous in the~$C^\infty$
topology, and such that the following property holds.
For each integer~$m \ge 1$ put~$\beta_m \= 2^{(m + 10)^3}$.
Then for every~$\uvarsigma = (\varsigma(m))_{m \in \N}$ in~$\signs$, every pair of integers~$\whm$ and~$m$ satisfying
$$ \whm \ge m \ge 1
\text{ and }
\varsigma(m) = \cdots = \varsigma(\whm), $$
and every~$\beta$ in~$\left[ \beta_m, \beta_{\whm} \right]$, the
unique Gibbs measure~$\rho_{\beta \cdot U(\uvarsigma)}$ for the
interaction~$\Phi_{\beta \cdot U(\uvarsigma)}$ satisfies
$$ \rho_{\beta \cdot U(\uvarsigma)} \left( \pi^{-1} \left( \left[ -
      2^{-(m + 1)^2}, 2^{-(m + 1)^2} \right] \right) \right)
\ge 1 - 2^{-m} $$
if~$\varsigma(m) = +$, and
$$ \rho_{\beta \cdot U(\uvarsigma)} \left( \pi^{-1} \left(\left[
      \tfrac{1}{2} - 2^{-(m + 1)^2}, \tfrac{1}{2} + 2^{-(m + 1)^2} \right]
\right) \right)
\ge
1 - 2^{-m} $$
if~$\varsigma(m) = -$.
\end{generic}

The proofs of Theorem~\ref{t:XY model sensitive dependence} and Main Lemma~\ref{t:XY model sensitive dependence} are given in~\S\ref{s:XY model}.

\subsection{Symbolic space}
\label{ss:symbolic}
Let~$d \ge 1$ be an integer, and let~$G$ be either~$\Z^d$ or~$\N_0^d$.  
Given a finite set~$F$ containing at least~$2$ elements, consider the
space~$\Sigma \= F^{G}$ endowed with the distance~$\dist$ defined for distinct elements~$(\theta_n)_{n \in G}$ and~$(\theta_n')_{n \in G}$ of~$\Sigma$, by
$$ \dist \left( (\theta_n)_{n \in G}, (\theta_n')_{n \in G} \right)
\=
2^{- \min \{ \| n \| : \theta_n \neq \theta_n' \}}, $$
where $\|\cdot\|$ is the sup-norm.
Denote by~$\sigma$ the action of~$G$ on~$\Sigma$ by translations,  by~$\sM$ the space of Borel probability measures on~$\Sigma$
endowed with the weak* topology, and by~$\sM_{\sigma}$ the subspace of
those that are invariant by~$\sigma$.
For~$\nu$ in~$\sM_\sigma$, denote by~$h_\nu$ the \emph{measure-theoretic
entropy of~$\nu$}.
The \emph{topological pressure of a continuous function~$\varphi \colon
  \Sigma \to \R$}, is
$$ P(\varphi)
\=
\sup \left\{ h_\nu + \int \varphi \dd \nu : \nu \in \sM_{\sigma}
\right\}. $$
A  \emph{equilibrium state for the potential~$\varphi$} is
a measure~$\nu$ at which the supremum above is attained.   
When~$\varphi$ is Lipschitz continuous and~$F = \Z^d$, the set of
equilibrium states agrees with the set of translation invariant Gibbs
measures, see for instance~\cite[Theorem~$5.3.1$]{Kel98}.
Moreover, if the dimension~$d$ is~$1$, then there is a
unique equilibrium state that we denote by~$\rho_{\varphi}$, see for
example~\cite[Theorem~$1.22$]{Bow08} in the case~$G = \N_0$
and~\cite[Theorem~III.$8.2$]{Sim93} in the case~$G = \Z$.

From now on we use ``translation invariant Gibbs measure''
instead of equilibrium state, even when~$G = \N_0^d$.

\begin{theoalph}[Sensitive dependence of Gibbs measures on the potential]
\label{t:symbolic sensitive dependence}
There is a Lipschitz continuous potential~$\varphi_0 \colon \Sigma \to \R$
and complementary open subsets~$U^+$ and~$U^-$ of~$\Sigma$, such that for every sequence of positive numbers~$(\hbeta_{\ell})_{\ell \in
  \N}$ satisfying~$\hbeta_\ell \to + \infty$ as~$\ell \to + \infty$,
the following property holds: There is an arbitrarily small Lipschitz
continuous perturbation~$\varphi$ of~$\varphi_0$ such that if for
each~$\ell$ we choose an arbitrary translation invariant Gibbs
measure~$\rho_\ell$ for the potential $\hbeta(\ell) \cdot \varphi$,
then the sequence~$(\rho_\ell)_{\ell \in \N}$ accumulates at the same time on a measure supported on~$U^+$ and on a measure supported on~$U^-$.
\end{theoalph}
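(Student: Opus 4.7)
The plan is to imitate the structure of the one-dimensional $XY$ argument: prove a symbolic analogue of the Main Lemma, and then deduce Theorem~\ref{t:symbolic sensitive dependence} by a routine choice of parameters. Concretely, I would construct a family of Lipschitz potentials $(\varphi(\uvarsigma))_{\uvarsigma \in \signs}$ depending continuously on $\uvarsigma$ in the Lipschitz topology, together with two complementary open sets $U^+, U^- \subset \Sigma$ and a very fast growing sequence $(\beta_m)_{m \in \N}$, such that whenever $\whm \ge m \ge 1$ and $\varsigma(m) = \cdots = \varsigma(\whm)$, every translation invariant Gibbs measure for the potential $\beta \cdot \varphi(\uvarsigma)$ with $\beta \in [\beta_m, \beta_{\whm}]$ assigns mass at least $1 - 2^{-m}$ to $U^{\varsigma(m)}$. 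Given this, Theorem~\ref{t:symbolic sensitive dependence} follows: set $\varphi_0 := \varphi(\uvarsigma_0)$ for any fixed $\uvarsigma_0$, and given $(\hbeta_\ell)$ choose $\uvarsigma$ close to $\uvarsigma_0$ in the product topology whose tail has arbitrarily long blocks of $+$'s and $-$'s arranged so that infinitely many $\hbeta_\ell$ lie in a constant-$+$ window $[\beta_m, \beta_{\whm}]$ and infinitely many in a constant-$-$ window. Continuity of $\uvarsigma \mapsto \varphi(\uvarsigma)$ makes $\varphi := \varphi(\uvarsigma)$ arbitrarily close to $\varphi_0$.

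To build the family, I would first fix two disjoint periodic orbits $K^+, K^- \subset \Sigma$ contained in disjoint clopen cylinders, and take $U^+, U^-$ to be a complementary pair of open neighborhoods of these cylinders. Next I would choose a nonnegative Lipschitz base potential $\varphi_*$ vanishing precisely on $K^+ \cup K^-$, so that at zero temperature only these two orbits survive. The family is then
\begin{equation*}
\varphi(\uvarsigma) \= \varphi_* + \sum_{m \ge 1} a_m \cdot \chi_m(\uvarsigma),
\end{equation*}
where $\chi_m(\uvarsigma)$ is a Lipschitz bump at spatial scale $\ell_m$ that lowers the energy on cylinders close to $K^{\varsigma(m)}$ and vanishes on cylinders close to the opposite ground state, and the amplitudes $a_m$ decay fast enough that $\sum_m \Lip(a_m \chi_m)$ converges. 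This would yield both Lipschitz regularity of $\varphi(\uvarsigma)$ and continuity of the map $\uvarsigma \mapsto \varphi(\uvarsigma)$ from $\signs$ into the Lipschitz space.

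The core step is the locking property. Given a constant-sign window $[\beta_m, \beta_{\whm}]$, the plan is to compare partition-function contributions near $K^+$ and near $K^-$ using a transfer-operator / variational argument: the entropy gained by mixing across the gap between the two ground states is bounded by a constant depending only on $\Sigma$, while the energy gap induced by the scale-$m$ tilt grows like $\beta \cdot a_m$. By calibrating $(a_m, \ell_m, \beta_m)$ so that $\beta_m a_m$ dominates all entropy losses and simultaneously $\beta_{\whm} \sum_{m' > \whm} \Lip(a_{m'} \chi_{m'})$ stays bounded (a hats-in-hats schedule), one shows that throughout $[\beta_m, \beta_{\whm}]$ the scale-$m$ tilt dominates all finer scales, so any equilibrium state must give mass at least $1 - 2^{-m}$ to cylinders near $K^{\varsigma(m)}$, and hence to $U^{\varsigma(m)}$.

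The main obstacle is the Lipschitz constraint: a bump at spatial scale $\ell_m$ with amplitude $a_m$ has Lipschitz constant of order $2^{\ell_m} a_m$, so summability of $\Lip(a_m \chi_m)$ forces the amplitudes to be correspondingly small, while the locking condition $\beta_m a_m \gg \log |F|$ ties $a_m$ to the growth of $\beta_m$ and to the sizes of the cylinders that carry $U^\pm$. Juggling $(\ell_m, a_m, \beta_m)$ so that one obtains a \emph{continuous} Lipschitz family with the locking property on a sequence of disjoint temperature windows whose union covers $[\beta_1, +\infty)$ is the principal quantitative difficulty; it replaces the ample freedom available in the smooth $XY$ construction and is where the specifically symbolic and Lipschitz nature of the setting enters the argument.
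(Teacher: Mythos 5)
Your overall architecture (a Main Lemma giving $\rho(U^{\varsigma(m)})\ge 1-2^{-m}$ on temperature windows $[\beta_m,\beta_{\whm}]$ over constant-sign blocks, plus the routine deduction of the theorem by choosing a sign sequence adapted to $(\hbeta_\ell)$) matches the paper. But the selection mechanism you propose for the Main Lemma has a genuine flaw. You take two periodic orbits $K^+,K^-$ as ground states and add, at each scale $m$, a bump that raises the potential \emph{only} near $K^{\varsigma(m)}$, with summable amplitudes $a_m$, and you lock the measure near $K^{\varsigma(m)}$ by comparing $\beta a_m$ with a fixed entropy constant. The trouble is that the coarser scales do not disappear: the peak value of $\varphi(\uvarsigma)$ at $K^+$ is $\sum_{\varsigma(k)=+}a_k$ and at $K^-$ is $\sum_{\varsigma(k)=-}a_k$, and since the $a_k$ decay rapidly (forced, in your scheme, by $\Lip(a_k\chi_k)\sim 2^{\ell_k}a_k$ being summable), the orbit carrying the first sign wins by a \emph{fixed} gap of order $a_1$. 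Both orbits carry zero entropy, so the unique maximizing measure of $\int\varphi\,d\nu$ is the periodic measure on the energetically favored orbit, and every zero-temperature accumulation point of equilibrium states equals that single measure; in particular, for all $\beta$ beyond a threshold depending only on the fixed gap, all Gibbs measures concentrate near one side, and the required oscillation on later windows with $\varsigma(m)$ equal to the opposite sign is impossible. A tiny tilt $\beta a_m$ at a late scale can never reverse a fixed gap created at scale $1$, no matter how large $\beta_m$ is; this is not a calibration issue but a structural one.

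What the paper does instead is keep the potential heights \emph{exactly equal} on the two sides at every scale and let entropy, not energy, make the $\beta$-dependent choice (the hats-in-hats idea). One fixes decreasing sequences of invariant compacts $(X_m^+)$, $(X_m^-)$ with interlaced topological entropies, $\htop(\sigma|_{X_{m+1}^-})<\htop(\sigma|_{X_m^+})$ and $\htop(\sigma|_{X_{m+1}^+})<\htop(\sigma|_{X_m^-})$, sets $Y_m^{\varsigma}=X_{m-1}^{\varsigma}\cup X_m^{-\varsigma}$, and uses bump terms $\1_\Sigma+\chi_{Y_m^{\varsigma}}$ with $\chi_Y=-\dist(\cdot,Y)$; each term boosts \emph{both} sides, a fat (high-entropy) hat on the $\varsigma(m)$-side and a thin (low-entropy) hat on the other, so the maximum value is the same on both sides and an Aizenman--Lieb type lemma selects the higher-entropy side at the appropriate $\beta$. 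Note also that $\|\chi_Y\|_{\Lip}\le 2$ regardless of how small $Y$ is, so the tension you anticipate between spatial scale and Lipschitz norm does not arise, and the amplitudes $\varepsilon_m$ and thresholds $\beta_m$ are chosen not by explicit transfer-operator estimates but inductively, by soft compactness and upper semicontinuity of entropy arguments (continuity of Gibbs measures in the potential, uniformly over $\beta$ in compact ranges, plus a two-level version of the entropy-selection lemma to cover all $\beta\in[\beta_m,\beta_{m+1}]$); this also handles non-uniqueness of Gibbs measures in dimension $d\ge 2$, which a transfer-operator approach would not. To repair your proposal you would have to replace the one-sided energy tilts by such an equal-height, entropy-asymmetric nested structure.
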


Using the terminology above, this theorem shows the potential~$\varphi_0$ is sensitive for translation invariant Gibbs
measures.
As for $XY$~models in~\S\ref{ss:XY model}, the interaction can be
chosen to be chaotic, and it can also be chosen to be non-chaotic.

The first examples in a finite-state space of a chaotic interaction
for translation invariant Gibbs measures were given in dimensions~$d = 1$ and~$d \ge 3$ by
Chazottes and Hochman in~\cite{ChaHoc10}.
In dimensions~$d \ge 3$, the interactions constructed by Chazottes and Hochman
are of finite range.
To the best of our knowledge it is open if in dimension~$d = 2$ there
is a finite-range interaction that is chaotic for translation
invariant Gibbs measures.\footnote{We note that for~$d = 2$ the
  interaction given by Theorem~\ref{t:symbolic sensitive dependence}
  is chaotic, but it is not of finite range.
It seems possible to obtain a similar example by adapting the
construction of Chazottes and Hochman in~\cite{ChaHoc10}.
To be more precise, denote by~$X$ the subshift of~$\{0, 1\}^{\N}$
constructed in~\cite{ChaHoc10}, and denote by~$\hX$ an invariant subshift of~$\{0, 1 \}^{\N^2}$ obtained by embedding the
product of countably many copies of~$X$, as in~\S\ref{sss:ground states ge 1}.
Then it seems possible to adapt the computations in~\cite{ChaHoc10},
to show that the potential~$x \mapsto - \dist(x, \hX)$ is chaotic.}
In dimension~$d = 1$, Br{\'e}mont showed that there is no finite-range interaction that is chaotic, see~\cite{Bre03} and also~\cite{Nek04,Lep05,ChaGamUga11}.
See also~\cite{BarLepLop12,Con1307,Lep12,BarLopMen13} and the monograph~\cite{BarLepLop13} for recent related results.

Theorem~\ref{t:symbolic sensitive dependence} follows easily from
the following.
\begin{generic}[Main Lemma~\ref{t:symbolic sensitive dependence}]
There is a continuous family of Lipschitz continuous potentials~$\left( \varphi(\uvarsigma) \right)_{\uvarsigma \in
  \signs}$, complementary open subsets~$U^+$ and~$U^-$ of~$\Sigma$,
and an increasing sequence of positive numbers~$(\beta_m)_{m \in \N}$
converging to~$+ \infty$, such that the following property holds: For every~$\uvarsigma = (\varsigma(m))_{m \in \N}$ in~$\signs$, every pair of integers~$m$ and~$\whm$ such that
$$ \whm \ge m \ge 1
\text{ and }
\varsigma(m) = \cdots = \varsigma(\whm), $$
every~$\beta$ in~$[\beta_m, \beta_{\whm}]$, and every translation invariant Gibbs measure~$\rho$ for the potential $\beta \cdot \varphi(\uvarsigma)$,
we have
$$ \rho(U^{\varsigma(m)})
\ge
1 - 2^{-m}. $$
\end{generic}

The following is a corollary of (the proof of) Main
Lemma~\ref{t:symbolic sensitive dependence}.
This corollary is proven in~\S\ref{ss:ground states}.
A subset~$X$ of~$\Sigma$ is \emph{invariant} if for every~$g$ in~$G$ we
have~$\sigma^g(X) = X$.\footnote{When~$G = \N_0^d$ such a set is
  sometimes called ``forward invariant''.}
\begin{coro}
\label{c:arbitrary accumulation}
Assume that the dimension~$d$ is~$1$.
Let~$X^+$ and~$X^-$ be disjoint compact subsets of~$\Sigma$ that
are invariant, minimal, and uniquely ergodic for~$\sigma$.
Suppose furthermore that~$\htop(\sigma|_{X^+}) =
\htop(\sigma|_{X^-})$, and let~$\rho^+$ and~$\rho^-$ be the unique
invariant probability measure supported on~$X^+$ and~$X^-$, respectively.
Then there is a Lipschitz continuous potential~$\varphi \colon \Sigma \to
\R$ such that the one-parameter family of Gibbs measures~$\left( \rho_{\beta \cdot \varphi} \right)_{\beta > 0}$
accumulates at the same time on~$\rho^+$ and~$\rho^-$ as~$\beta \to + \infty$.
\end{coro}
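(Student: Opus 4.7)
The plan is to deduce the corollary from Main Lemma~\ref{t:symbolic sensitive dependence} by specializing the auxiliary ``ground-state'' sets in its construction to the prescribed $X^+$ and~$X^-$. First, inspect the proof of Main Lemma~\ref{t:symbolic sensitive dependence} to check that, in dimension $d=1$, the roles of the competing zero-temperature limits can be played by any pair of disjoint, compact, minimal, uniquely ergodic, $\sigma$-invariant subsets of~$\Sigma$, provided they have the same topological entropy. The equal-entropy hypothesis $\htop(\sigma|_{X^+}) = \htop(\sigma|_{X^-})$ is precisely what makes the low-temperature pressure balance symmetric; without it, the side with larger entropy would dominate at zero temperature and no alternation between~$U^+$ and~$U^-$ would be possible. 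After this substitution, the open sets $U^+$ and $U^-$ furnished by the Main Lemma may be taken to be arbitrarily small disjoint clopen neighborhoods of $X^+$ and $X^-$ respectively.

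With this setup in place, specialize the Main Lemma to the alternating symbol $\uvarsigma_* := (+, -, +, -, \dots) \in \signs$, and put $\varphi := \varphi(\uvarsigma_*)$. Apply the conclusion with $\whm := m$ for each $m \geq 1$, which is trivially compatible with the constancy hypothesis. Since $d = 1$, the Gibbs measure $\rho_{\beta_m \cdot \varphi}$ is unique, and
$$ \rho_{\beta_m \cdot \varphi}\bigl(U^{\varsigma_*(m)}\bigr) \ge 1 - 2^{-m}. $$
Extracting weak* convergent subsequences along the odd and even values of~$m$ yields two accumulation points $\rho_\infty^+$ and $\rho_\infty^-$ of the one-parameter family $(\rho_{\beta \cdot \varphi})_{\beta > 0}$ as $\beta \to +\infty$, with $\rho_\infty^+$ supported in the closure of~$U^+$ and $\rho_\infty^-$ in the closure of~$U^-$.

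To identify these accumulation points with $\rho^+$ and $\rho^-$, observe that any weak* accumulation point of $(\rho_{\beta \cdot \varphi})$ as $\beta \to +\infty$ is $\sigma$-invariant and $\varphi$-maximizing, hence, by the construction, supported on $X^+ \cup X^-$. By unique ergodicity of $X^+$ and of $X^-$, every $\sigma$-invariant probability measure supported on $X^+ \cup X^-$ has the form $t \rho^+ + (1 - t)\rho^-$ for some $t \in [0,1]$. Because $U^+$ and $U^-$ are disjoint neighborhoods of $X^+$ and $X^-$, the mass concentration above forces $\rho_\infty^+ = \rho^+$ and $\rho_\infty^- = \rho^-$. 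Hence $(\rho_{\beta \cdot \varphi})_{\beta > 0}$ accumulates at both $\rho^+$ and $\rho^-$, as claimed.

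The main obstacle is the first step: reopening the construction of Main Lemma~\ref{t:symbolic sensitive dependence} and verifying that it tolerates arbitrary prescribed $X^\pm$ satisfying the hypotheses of the corollary, with uniform control over the Lipschitz constant of $\varphi(\uvarsigma_*)$ and over the geometry of $U^\pm$. This is where the equal-entropy assumption is essential; it is exactly the symmetry condition that keeps the pressure competition responsible for the alternation between $U^+$ and $U^-$ in balance, while the minimality and unique ergodicity hypotheses are what allow the limit identification in the last step to be made canonical.
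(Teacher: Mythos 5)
Your overall shape is right, and your identification step is sound: zero-temperature accumulation points are $\varphi$-maximizing, and if the construction makes $\varphi$ attain its maximum precisely on $X^+\cup X^-$ (this is Remark~\ref{r:arbitrary accumulation}), then unique ergodicity forces any accumulation point to be a convex combination $t\rho^+ + (1-t)\rho^-$, and the concentration estimates pin down $t$. This is in fact a slightly different (and legitimate) identification route from the paper's, which instead re-runs the construction with shrinking clopen neighborhoods $U_m^{\pm}$ of $X_m^{\pm}$ and $\delta = 2^{-m}$ in Lemmas~\ref{l:ground states maximize entropy} and~\ref{l:uniform dependence of equilibria}.

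The genuine gap is the step you defer as a ``check'': rebuilding the construction of the Main Lemma around the prescribed sets. The Main Lemma's potential is built from decreasing sequences $(X_m^+)_{m}$ and $(X_m^-)_{m}$ of compact invariant sets satisfying the strict entropy interlacing~\eqref{e:entropy condition'}, and for your argument you additionally need $\bigcap_m X_m^{\pm} = X^{\pm}$ so that the maximum locus of $\varphi(\uvarsigma_*)$ is exactly $X^+\cup X^-$. Producing such sequences is the actual content of the paper's proof: it requires Lemma~\ref{l:upper approximation by sft's} (approximation from outside by subshifts of finite type with \emph{strictly} decreasing entropy), which needs $X^{\pm}$ to be transitive and \emph{not} of finite type. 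This cannot be verified ``by inspection'' for an arbitrary pair satisfying the corollary's hypotheses: when $\rho^{\pm}$ is purely atomic, $X^{\pm}$ is a periodic orbit (the case $X^+=\{\overline{0}\}$, $X^-=\{\overline{1}\}$ is allowed and is exactly the one used in Appendix~\ref{s:marginal}), and then no such approximating sequence exists at all --- a decreasing family of compact invariant sets whose intersection is a subshift of finite type must eventually coincide with it, so its entropies cannot strictly decrease to $\htop(\sigma|_{X^{\pm}})$. The paper resolves this by enlarging $X^{\pm}$ with an orbit asymptotic to it (homoclinic in the case $G=\Z$, a backward string in the case $G=\N_0$), which makes the set transitive, not of finite type, and still uniquely ergodic with measure $\rho^{\pm}$; your identification argument would survive this enlargement, but without it (or some substitute) your first step fails, so as written the proposal does not prove the corollary in the atomic case, and in the non-atomic case it still leaves the interlacing construction unproved rather than checked.
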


The hypothesis~$\htop(\sigma|_{X^+}) = \htop(\sigma|_{X^-})$ is
necessary, see~\cite[Corollary~$1$]{AizLie81} or Lemma~\ref{l:ground states maximize entropy}.

Combined with the Jewett-Krieger realization theorem, the following is
a direct consequence of the previous corollary, see for example~\cite[Corollary~$3.2$]{Kri72}.
It solves a problem formulated by Chazottes and
Hochman in~\cite[\S$4.2$]{ChaHoc10}, see also Appendix~\ref{s:marginal}.

\begin{coro}
\label{c:divergence realization}
Assume that the dimension~$d$ is~$1$.
Let~$\mu^+$ and~$\mu^-$ be ergodic measures defined on a Lebesgue
space having the same finite entropy.
Then, provided the finite set~$F$ is sufficiently large, there is a
Lipschitz continuous potential~$\varphi \colon \Sigma \to \R$ such
that the one-parameter family of Gibbs measures~$\left( \rho_{\beta \cdot \varphi} \right)_{\beta > 0}$
accumulates at the same time on a measure isomorphic to~$\mu^+$ and on a measure isomorphic to~$\mu^-$ as~$\beta \to +
\infty$.
\end{coro}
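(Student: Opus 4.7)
The plan is to deduce Corollary~\ref{c:divergence realization} from Corollary~\ref{c:arbitrary accumulation} by using the Jewett--Krieger realization theorem to present $\mu^+$ and $\mu^-$ as the unique invariant measures of two disjoint, minimal, uniquely ergodic subshifts inside a common full shift on a finite alphabet.

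First, for each sign $\star \in \{+,-\}$, I would invoke the refined Jewett--Krieger theorem in the form of \cite[Corollary~3.2]{Kri72}. This produces a finite alphabet~$F^\star$, whose cardinality depends only on the common entropy $h \= h(\mu^+) = h(\mu^-)$, together with a subshift~$X^\star$ of $(F^\star)^G$ that is compact, invariant, minimal, and uniquely ergodic for the shift action of~$G$, and whose unique invariant probability measure~$\rho^\star$ is measurably isomorphic to~$\mu^\star$. (For $G = \Z$ this is the classical version; for $G = \N_0$, which is not needed here since the corollary assumes $d = 1$ and we may take $G = \Z$, one instead identifies invariant measures of the two-sided and one-sided shifts.)

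Next, to house the two systems inside a single full shift, I would set $F \= F^+ \sqcup F^-$ and embed $X^\pm$ into $\Sigma \= F^{G}$ by the natural inclusion $(F^\pm)^G \hookrightarrow F^G$. Since every configuration in~$X^+$ uses only symbols from~$F^+$ and every configuration in~$X^-$ uses only symbols from~$F^-$, the images are automatically disjoint compact invariant subsets of~$\Sigma$; they are still minimal and uniquely ergodic, with unique invariant measures~$\rho^+$ and~$\rho^-$. By unique ergodicity and the variational principle,
$$ \htop(\sigma|_{X^\pm}) \;=\; h_{\rho^\pm}(\sigma) \;=\; h(\mu^\pm) \;=\; h, $$
so the entropy hypothesis $\htop(\sigma|_{X^+}) = \htop(\sigma|_{X^-})$ of Corollary~\ref{c:arbitrary accumulation} is met.

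Applying Corollary~\ref{c:arbitrary accumulation} to these $X^+$, $X^-$, $\rho^+$, $\rho^-$ then yields a Lipschitz continuous potential~$\varphi \colon \Sigma \to \R$ whose one-parameter family of Gibbs measures $(\rho_{\beta\cdot\varphi})_{\beta>0}$ accumulates simultaneously on~$\rho^+$ and~$\rho^-$, and since each~$\rho^\pm$ is isomorphic to the corresponding~$\mu^\pm$, this is exactly the required conclusion. There is no real obstacle once Corollary~\ref{c:arbitrary accumulation} is in hand; the only point that needs care is arranging the two Jewett--Krieger models to sit disjointly inside a single full shift, which is precisely what the disjoint-union alphabet accomplishes, and which in turn dictates the qualifier ``provided the finite set~$F$ is sufficiently large'' in the statement.
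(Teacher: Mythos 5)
Your argument is exactly the paper's intended proof: the authors deduce Corollary~\ref{c:divergence realization} from Corollary~\ref{c:arbitrary accumulation} together with the Jewett--Krieger realization theorem (citing~\cite[Corollary~$3.2$]{Kri72}), and your disjoint-union-alphabet embedding plus the unique-ergodicity entropy check simply fills in the details they leave implicit. The proposal is correct and matches the paper's approach.
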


The proof of Main Lemma~\ref{t:symbolic sensitive dependence} is given
in~\S\ref{s:symbolic}.
The deduction of Theorem~\ref{t:symbolic sensitive dependence} from Main Lemma~\ref{t:symbolic sensitive dependence} is analogous to that of
Theorem~\ref{t:XY model sensitive dependence} from Main Lemma~\ref{t:XY model sensitive dependence} given
in~\S\ref{s:XY model}, and we
omit it.

\subsection{Acknowledgments}
\label{ss:acknowledgments}
We would like to thank Aernout van~Enter for useful discussions and
various remarks on an earlier version, Rodrigo Bissacot for useful
discussions, and Ronnie Pavlov for his assistance on multidimensional
subshifts.
We also thank the referees for useful suggestions on improving the presentation.

The first named author acknowledges partial support from FONDECYT grant 11121453, Anillo DYSYRF grant ACT 1103, and Basal-Grant CMM PFB-03.
The second named author acknowledges partial support from FONDECYT grant 1141091.

\section{One-dimensional \texorpdfstring{$XY$}{XY} models}
\label{s:XY model}
The purpose of this section is to prove Theorem~\ref{t:XY model
  sensitive dependence} and Main Lemma~\ref{t:XY model sensitive dependence}.
After some general considerations on Gibbs measures in~\S\ref{ss:Gibbs
measures}, the proofs of these results are given in~\S\ref{ss:proof of XY model chaotic dependence}.

Throughout this section we use~$\Leb$ to denote the probability measure on~$\circle$ induced by the Lebesgue
measure on~$\R$.

\subsection{Gibbs measures of symmetric nearest-neighbor interactions}
\label{ss:Gibbs measures}
Let~$D \colon \circle^{\Z} \to \circle^{\Z}$ be the map defined by
$$ D \left( \left( \theta_n \right)_{n \in \Z} \right)
\=
\left( \left( \theta_n - \theta_{n + 1} \right)_{n \in \Z} \right), $$
and for each~$\theta$ in~$\circle$ denote by~$T_{\theta} \colon
\circle^{\Z} \to \circle^{\Z}$ the map defined by
$$ T_{\theta} \left( \left( \theta_n \right)_{n \in \Z} \right)
\=
\left( \left( \theta_n + \theta \right)_{n \in \Z} \right). $$
Note that for each~$\theta$ in~$\circle$ we have~$D \circ T_{\theta} =
D$, and that for each~$\utheta$ in~$\circle^{\Z}$ we have~$D^{-1}(D(\utheta))
= \{ T_{\theta} (\utheta) : \theta \in \circle \}$.

A measure on~$\circle^{\Z}$ is \emph{symmetric} if for
each~$\theta$ in~$\circle$ it is invariant by~$T_{\theta}$.

\begin{lemm}
\label{l:Gibbs measures}
Let~$U \colon \circle \to \R$ be a continuous function.
Then for every~$\beta$ in~$\R$ there is a unique Gibbs
measure~$\rho_{\beta \cdot U}$ for the interaction~$\Phi_{\beta \cdot
  U}$.
Moreover, $\rho_{\beta \cdot U}$ is characterized as the unique symmetric measure whose
image by~$D$ is equal to
$$ \bigotimes_{\Z} \left(\frac{\exp(\beta \cdot U)}{\int_{\circle} \exp(\beta \cdot
  U(\theta)) \dd \theta} \right) \Leb. $$
In particular, we have
\begin{equation}
\label{e:Markov property}
\pi_* \rho_{\beta \cdot U}
=
\left( \frac{\exp(\beta \cdot U)}{\int_{\circle} \exp(\beta \cdot
  U(\theta)) \dd \theta} \right) \Leb.
\end{equation}
\end{lemm}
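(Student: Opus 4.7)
The plan is to exhibit an explicit candidate measure~$\mu_\beta$ and identify it with~$\rho_{\beta \cdot U}$ by invoking the standard uniqueness theorem for one-dimensional continuous nearest-neighbor interactions on a compact single-site space (as in~\cite[Theorem~III.8.2]{Sim93}, already cited in the statement). Put
$$
\nu_\beta \= \frac{\exp(\beta \cdot U)}{\int_\circle \exp(\beta \cdot U(\theta)) \dd \theta} \Leb,
$$
and let~$\Psi \colon \circle \times \circle^{\Z} \to \circle^{\Z}$ be the homeomorphism sending~$(\theta_\star, (\xi_n)_{n \in \Z})$ to the unique~$\utheta = (\theta_n)_{n \in \Z}$ with~$\theta_0 = \theta_\star$ and~$\theta_n - \theta_{n + 1} = \xi_n$ for all~$n \in \Z$. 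Define $\mu_\beta \= \Psi_*\bigl(\Leb \otimes \nu_\beta^{\otimes \Z}\bigr)$. Since~$D \circ \Psi$ is the projection on the second factor, $D_* \mu_\beta = \nu_\beta^{\otimes \Z}$; and since each~$T_\theta$ corresponds under~$\Psi$ to translation by~$\theta$ in the first factor, the translation invariance of~$\Leb$ makes~$\mu_\beta$ symmetric.

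Next I would verify the DLR equations for~$\mu_\beta$ by exploiting the product structure of~$\Leb \otimes \nu_\beta^{\otimes \Z}$ in~$\Psi$-coordinates. For a single site~$\Lambda = \{k\}$, conditioning on~$(\theta_n)_{n \neq k}$ fixes the~$\xi_n$'s with~$n \neq k - 1, k$ and leaves one free parameter; substituting~$\xi_{k - 1} = \theta_{k - 1} - \theta_k$ and~$\xi_k = \theta_k - \theta_{k + 1}$, the product structure~$\nu_\beta \otimes \nu_\beta$ shows that the conditional density of~$\theta_k$ is proportional to~$\exp\bigl( \beta U(\theta_{k - 1} - \theta_k) + \beta U(\theta_k - \theta_{k + 1}) \bigr)$, which is exactly the Gibbs kernel. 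The analogous calculation applied to any finite~$\Lambda \subset \Z$ establishes the DLR equations for~$\mu_\beta$, and the cited uniqueness theorem then yields $\rho_{\beta \cdot U} = \mu_\beta$.

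To finish the characterization, let~$\mu'$ be any symmetric probability measure on~$\circle^{\Z}$ with~$D_* \mu' = \nu_\beta^{\otimes \Z}$. Every fiber of~$D$ is a single orbit of the free, transitive~$\circle$-action~$\{T_\theta\}_{\theta \in \circle}$, so disintegrating~$\mu'$ along~$D$ and using its symmetry forces each conditional measure to be the normalized Haar measure on the fiber; this coincides with the conditional structure of~$\mu_\beta$ inherited from the~$\Leb$-factor of~$\Psi$, so~$\mu' = \mu_\beta$. The formula for~$\pi_* \rho_{\beta \cdot U}$ now follows at once, since~$\pi$ is the~$0$-th coordinate of~$D$ and the~$0$-th marginal of~$\nu_\beta^{\otimes \Z}$ is~$\nu_\beta$. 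I expect the only slightly delicate step to be the DLR check for~$\mu_\beta$: it requires careful tracking of conditional distributions through~$\Psi$. If one prefers to avoid the direct calculation, the same conclusion can be read off from the thermodynamic limit of finite-volume Gibbs measures on~$[-N, N]$ with free boundary conditions, which, in~$\Psi$-coordinates, are themselves of product form~$\Leb \otimes \nu_\beta^{\otimes [-N, N-1]}$.
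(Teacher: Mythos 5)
Your proposal is correct in outline, but it takes a genuinely different route from the paper. You construct the candidate measure explicitly as the skew-product $\Psi_*\bigl(\Leb \otimes \nu_\beta^{\otimes \Z}\bigr)$ in increment coordinates, verify the DLR equations directly, and then import uniqueness of the Gibbs measure from \cite[Theorem~III.$8.2$]{Sim93}. The paper instead builds the infinite-volume measure as a Kolmogorov-consistent thermodynamic limit of free-boundary finite-volume measures and, crucially, proves uniqueness from scratch: a uniform estimate $|H_{\Lambda_n}(\utheta|\utheta')-H_{\Lambda_n}(\utheta)|\le 2C$ shows the limit measure is absolutely continuous with respect to \emph{every} Gibbs measure, hence with respect to every pure state, forcing the simplex of Gibbs measures to have a single extreme point; only afterwards does it compute $D_*\rho_\infty$ via the finite-dimensional maps $D_n$ and invoke its Lemma~\ref{l:symmetric measures} on symmetric lifts. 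What your approach buys is brevity and explicitness (the identification of $D_*\rho_{\beta\cdot U}$ and the symmetry are immediate from the product structure, and your fiber-disintegration argument for uniqueness of the symmetric lift is the same content as Lemma~\ref{l:symmetric measures}, modulo the routine measurability point that the null set of fibers may a priori depend on $\theta$, handled by Fubini and weak* continuity of $\theta\mapsto (T_\theta)_*\mu'_\xi$). What the paper's route buys is self-containedness: the lemma is explicitly offered in \S\ref{ss:XY model} as an alternative to the citation of \cite{Sim93}, so outsourcing uniqueness to that theorem gives a logically valid but weaker result in spirit; if you go your way, you should check that the quoted theorem indeed covers compact continuous single-spin spaces and yields uniqueness among \emph{all} DLR states, not merely translation-invariant ones, and you should carry out the DLR verification for a general finite interval (not just a single site), which your sketch only asserts by analogy.
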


The proof of this lemma is given after the following general lemma.
\begin{lemm}
\label{l:symmetric measures}
For every measure~$\mu$ on~$\circle^{\Z}$ there is a unique
symmetric measure~$\hmu$ on~$\circle^{\Z}$ such that~$D_* \hmu = \mu$.
\end{lemm}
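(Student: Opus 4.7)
The plan is to exhibit an explicit trivialization of~$D$ as a principal~$\circle$\nobreakdash-bundle, which reduces both existence and uniqueness to the uniqueness of Haar measure on~$\circle$.

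First, I would construct a continuous section~$s \colon \circle^{\Z} \to \circle^{\Z}$ of~$D$ by fixing the value at site~$0$: for $\underline{\delta} = (\delta_n)_{n \in \Z}$ set $s(\underline{\delta})_0 \= 0$ and propagate in both directions via the recursion $s(\underline{\delta})_{n + 1} \= s(\underline{\delta})_n - \delta_n$. Since each coordinate of~$s(\underline{\delta})$ depends on only finitely many of the $\delta_n$, the section is continuous, and by construction $D \circ s = \id$. Next, I would consider the map $F \colon \circle^{\Z} \times \circle \to \circle^{\Z}$ given by $F(\underline{\delta}, \theta) \= T_{\theta}(s(\underline{\delta}))$. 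The key observation is that~$F$ is a homeomorphism: since~$s(\underline{\delta})$ has value~$0$ at site~$0$, the continuous map $\utheta \mapsto (D(\utheta), \theta_0)$ is a two-sided inverse of~$F$. Under this identification, projection to the first factor corresponds to~$D$, and the translation~$T_{\theta_0}$ on~$\circle^{\Z}$ corresponds to the translation $(\underline{\delta}, \theta) \mapsto (\underline{\delta}, \theta + \theta_0)$ in the second factor.

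For existence, I would simply set $\hmu \= F_*(\mu \otimes \Leb)$. Translation invariance of~$\Leb$ in the second factor yields the symmetry of~$\hmu$, and $D \circ F$ is projection to the first factor, so $D_*\hmu = \mu$. For uniqueness, given any symmetric measure~$\nu$ with~$D_*\nu = \mu$, the pullback $(F^{-1})_*\nu$ on~$\circle^{\Z} \times \circle$ has first-factor marginal~$\mu$ and is invariant under translation in the second factor. Disintegrating along the first factor produces conditional probability measures on each fiber $\{\underline{\delta}\} \times \circle$ that are translation-invariant on the circle, and hence all equal to~$\Leb$; therefore $(F^{-1})_*\nu = \mu \otimes \Leb$ and $\nu = \hmu$.

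The only step requiring any real care is the disintegration, which is routine since~$\circle^{\Z}$ is a compact metric space; the essence of the argument is that the free continuous~$\circle$\nobreakdash-action by the~$T_\theta$ admits a continuous global section, so the orbit structure is globally trivial and the lemma reduces to uniqueness of Haar measure on~$\circle$.
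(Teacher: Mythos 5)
Your argument is correct, but it takes a different route from the paper. The paper never trivializes the bundle explicitly: it averages a test function $f$ over the $\circle$\nobreakdash-action, $\widehat{f} = \int f \circ T_{\theta} \dd \Leb(\theta)$, observes that $\widehat{f} = \widetilde{f} \circ D$ for a continuous $\widetilde{f}$, defines $\hmu$ as the functional $f \mapsto \int \widetilde{f} \dd \mu$, and gets uniqueness from the one-line computation $\int f \dd \hmu = \int \widehat{f} \dd \hmu = \int \widetilde{f} \dd \mu$ valid for any symmetric $\hmu$ with $D_* \hmu = \mu$ --- no disintegration, no explicit homeomorphism. Your approach makes the underlying geometry explicit: the section $s$ (which is implicitly what makes the paper's $\widetilde{f}$ well defined and continuous) trivializes $D$ as a product $\circle^{\Z} \times \circle$ equivariantly, giving the clean formula $\hmu = F_*(\mu \otimes \Leb)$ and reducing everything to uniqueness of Haar measure on the circle. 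The price is the disintegration theorem in the uniqueness step, where you should note the standard refinement that fiberwise invariance holds a priori only for almost every fiber for each fixed rotation, so one passes through a countable dense set of rotations and weak* continuity before invoking uniqueness of the rotation-invariant probability measure on $\circle$; this is routine on a compact metric space, as you say, but it is the one place your proof uses machinery the paper avoids. Alternatively, in your trivialized coordinates you could test against product functions $f(\underline{\delta})g(\theta)$ and use Fubini plus rotation invariance, which recovers the paper's averaging computation and dispenses with disintegration altogether.
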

\begin{proof}
Given a continuous function~$f \colon \circle^{\Z} \to \R$,
put~$\widehat{f} \= \int f \circ T_{\theta} \dd \Leb(\theta)$, and note
that there is a continuous function~$\widetilde{f} \colon
\circle^{\Z} \to \R$ satisfying~$\widehat{f} = \widetilde{f} \circ D$.

Given a measure~$\mu$ on~$\circle^{\Z}$, the map~$f \mapsto \int
\widetilde{f} \dd \mu$ defines a symmetric measure on~$\circle^{\Z}$ whose image by~$D$ is equal to~$\mu$.
To prove that this is the only measure with these properties, let~$\hmu$ be a symmetric measure on~$\circle^{\Z}$
satisfying~$D_* \hmu = \mu$, and let~$f \colon \circle^{\Z} \to \R$ be a
continuous function.
Then by the change of variable formula we have
$$ \int_{\circle^{\Z}} f \dd \hmu
=
\int_\T \int_{\circle^{\Z}} f \circ T_{\theta} \dd \hmu \dd \theta
=
\int_{\circle^{\Z}} \widehat{f} \dd \hmu
=
\int_{\circle^{\Z}} \widetilde{f} \dd \mu. $$
This proves uniqueness and completes the proof of the lemma.
\end{proof}

\begin{proof}[Proof of Lemma~\ref{l:Gibbs measures}]
Replacing~$U$ by~$\beta \cdot U$ if necessary, assume~$\beta = 1$.

Denote by~$\sP_f(\Z)$ the collection of finite subsets of~$\Z$.
For~$\Lambda$ in~$\sP_f(\Z)$ denote by~$\pi_{\Lambda} \colon \circle^{\Z}
\to \circle^{\Lambda}$ the canonical projection, and by~$\Leb_{\Lambda} \=
\bigotimes_{\Lambda} \Leb$ the product measure on~$\circle^{\Lambda}$.
Moreover, consider the free boundary
condition Hamiltonian~$H_{\Lambda} \colon \circle^{\Lambda} \to \R$ defined by
$$ H_{\Lambda} = \sum_{X \in \sP_f(\Z) : X \subset \Lambda}
\Phi_U(X), $$
and put~$Z_{\Lambda} \= \int_{\circle^{\Lambda}}
\exp \left( -H_\Lambda(\utheta) \right) \dd \Leb_\Lambda(\utheta)$.

For each integer~$n \ge 1$, put
$$ \Lambda_n
\=
\{ -n, \ldots, n \}
\text{ and }
\rho_n
\=
Z_{\Lambda_n}^{-1} \exp \left( -H_{\Lambda_n}\right)
\Leb_{\Lambda_n}. $$
A straightforward computation shows that for every pair of integers~$n$ and~$\whn$ satisfying~$\whn
\ge n \ge 1$, and for every measurable subset~$A$ of~$\circle^{\Lambda_n}$
we have $\rho_{\whn} \left( A \times \circle^{\Lambda_{\whn} \setminus
  \Lambda_n} \right) = \rho_n(A)$.
Thus, by Kolmogorov's theorem there is a unique
measure~$\rho_{\infty}$ on~$\circle^{\Z}$ so that for every integer~$n
\ge 1$ we have~$(\pi_{\Lambda_n})_* \rho_{\infty} = \rho_n$.
The measure~$\rho_\infty$ is clearly translation invariant.

Denote by~$\sM_U$ the simplex of all Gibbs measures for the
interaction~$\Phi_U$.
In part~$1$ below we show that~$\rho_{\infty}$ is absolutely continuous with respect
to each measure in~$\sM_U$, and in part~$2$ we conclude the proof of the lemma using this fact.

\partn{1} 
For each~$\Lambda$ in~$\sP_f(\Z)$ and each~$\utheta'$ in~$\circle^{\Z \setminus \Lambda}$ consider
the Hamiltonian~$H_{\Lambda} \colon \circle^{\Lambda}  \times
\circle^{\Z \setminus \Lambda} \to \R$ defined by
$$ H_{\Lambda}(\utheta | \utheta')
=
\sum_{X \in \sP_f(\Z) : X \cap \Lambda \neq \emptyset} \Phi_U(X)
(\utheta \times \utheta'), $$
and put~$Z_{\Lambda}(\utheta') \= \int_{\circle^{\Lambda}}
\exp \left( -H_\Lambda(\utheta|\utheta') \right) \dd \Leb_\Lambda(\utheta)$.

Putting~$C \= \sup_{\circle} |U|$, which is finite since~$U$ is
continuous, for every integer~$n \ge 1$, every~$\utheta$
in~$\circle^{\Lambda_n}$, and every~$\utheta'$ in~$\circle^{\Z
  \setminus \Lambda_n}$, we have
$$ | H(\utheta|\utheta') - H(\utheta)| \le 2C. $$
It follows that~$Z_{\Lambda_n} \ge \exp(-2C) Z_{\Lambda_n}(\utheta')$
and, together with the DLR equations, that for every~$\rho$ in~$\sM_U$ and every measurable subset~$A$ of~$\circle^{\Lambda_n}$ we have
\begin{multline*}
\rho_{\infty} \left( A \times \circle^{\Z \setminus \Lambda_n} \right)
\\
\begin{aligned}
& =
\int_A Z_{\Lambda_n}^{-1} \exp \left( -H_{\Lambda_n}(\utheta) \right) \dd \Leb_{\Lambda_n}(\utheta)
\\ & =
\int_{\circle^{\Z \setminus \Lambda_n}} \int_A Z_{\Lambda_n}^{-1} \exp \left( -H_{\Lambda_n}(\utheta) \right) \dd \Leb_{\Lambda_n}(\utheta) \dd (\pi_{\Z \setminus \Lambda_n})_* \rho (\utheta')
\\ & \le
\exp(4C) \int_{\circle^{\Z \setminus \Lambda_n}} \int_A Z_{\Lambda_n}(\utheta')^{-1} \exp \left(
  -H_{\Lambda_n}(\utheta|\utheta') \right)
\dd \Leb_{\Lambda_n}(\utheta) \dd (\pi_{\Z \setminus \Lambda_n})_* \rho (\utheta')
\\ & =
\exp(4C) \rho \left( A \times \circle^{\Z \setminus \Lambda_n} \right).
\end{aligned}
\end{multline*}
Since~$n$ and~$A$ are arbitrary, this shows that~$\rho_{\infty}$ is absolutely continuous with respect
to~$\rho$.

\partn{2}
By, e.g., \cite[Corollaries~III$.2.10$ and~III$.3.10$]{Sim93}, there is a translation invariant and ergodic measure~$\rho$ in~$\sM_U$.
By part~$1$ the measure~$\rho_\infty$ is absolutely continuous with
respect to~$\rho$.
Since~$\rho_{\infty}$ is also translation invariant, it follows
that~$\rho = \rho_\infty$, see for
example~\cite[Lemma~$2.2.2$]{Kel98}.
In particular, $\rho_{\infty}$ is in~$\sM_U$.
Let~$\rho'$ and~$\rho''$ be pure states for the interaction~$\Phi_U$,
\emph{i.e.}, extreme points of~$\sM_U$.
By part~$1$ the measure~$\rho_{\infty}$ is absolutely continuous with
respect to~$\rho'$ and with respect to~$\rho''$.
This implies that~$\rho$ and~$\rho'$ are not mutually singular, and
therefore that they are equal, see for
example~\cite[Theorem~III$.5.1$(b)]{Sim93}.
This proves that~$\sM_U$ has a unique extreme point, and therefore
that~$\sM_U$ is reduced to~$\{ \rho_{\infty} \}$.

Since for each~$\theta$ in~$\circle$ the interaction~$\Phi_U$ is
invariant by~$T_{\theta}$, and since~$\rho_{\infty}$ is the
unique Gibbs measure for this interaction, it follows
that~$\rho_{\infty}$ is symmetric.
On the other hand, if for each integer~$n \ge 1$ we denote by~$D_n \colon
\circle^{\Lambda_n} \to \circle^{\Lambda_n \setminus \{ n \}}$ the map
defined by
$$ D_n \left( (\theta_k)_{k \in \Lambda_n} \right)
=
(\theta_k - \theta_{k + 1})_{k \in \Lambda_n \setminus \{ n \}}, $$
then a straightforward computation shows that
\begin{equation*}
\left( \pi_{\Lambda_n \setminus \{ n \}} \right)_* D_* \rho_{\infty}
=
(D_n)_* \rho_n
=
\bigotimes_{\Lambda_n \setminus \{ n \}} \left(
  \frac{\exp(U)}{\int_{\circle} \exp(U(\theta)) \dd\theta} \right) \Leb.
\end{equation*}
Since this holds for every~$n$, this proves~$D_* \rho_{\infty} =
\bigotimes_{\Z} \left( \frac{\exp(U)}{\int_{\circle}
  \exp(U(\theta)) \dd \theta} \right) \Leb$, and together with
Lemma~\ref{l:symmetric measures} concludes the proof of the lemma.
\end{proof}

\subsection{Proofs of Theorem~\ref{t:XY model sensitive
    dependence} and Main Lemma~\ref{t:XY model sensitive dependence}}
\label{ss:proof of XY model chaotic dependence}
Note that, with the notation and terminology in~\S\ref{ss:Gibbs measures}, the
ferromagnetic (resp. antiferromagnetic) phase is characterized as the
unique symmetric measure on~$\circle^{\Z}$ whose image by~$D$ is equal to~$\bigotimes_{\Z}
\delta_0$ (resp.~$\bigotimes_{\Z} \delta_{\tfrac{1}{2}}$).

For an integer~$r \ge 1$ and a function~$\varphi \colon \circle \to \R$
that is~$r$ times continuously differentiable, consider the~$C^r$-norm
of~$\varphi$:
$$ \| \varphi \|_{C^r}
\=
\sum_{\ell = 0}^{r} \left\| \varphi^{(\ell)} \right\|_{\infty}. $$
Fix a smooth function~$\chi \colon \R \to [0, 1]$ that is constant
equal to~$0$ on~$\R \setminus (-1 ,1)$ and constant equal to~$1$
on~$\left[- \tfrac{2}{3}, \tfrac{2}{3} \right]$.
For an interval~$I$ of~$\circle$, denote by~$|I| \= \Leb(I)$ its
length, and let~$\chi_I \colon \circle \to [0, 1]$ be the function that is constant
equal to~$0$ on~$\circle \setminus I$ and that is defined on~$I$ as
follows: Let~$c$ in~$\R$ be such that~$c \mod \Z$ is the middle point
of~$I$, and for each~$x$ in~$\left[ - \tfrac{|I|}{2}, \tfrac{|I|}{2} \right]$ put
$$ \chi_I(c + x \mod \Z) \= \chi \left(\frac{2x}{|I|} \right). $$
Note that for every integer~$\ell \ge 0$ we have~$\left\| \chi_I^{(\ell)}
\right\|_{\infty} = \left( \tfrac{2}{|I|} \right)^{\ell} \left\| \chi^{(\ell)}
\right\|_{\infty}$.
Considering that~$|I| \le 1$, this implies
\begin{equation}
  \label{e:C^r norm}
\| \chi_I \|_{C^r}
\le
\left( \tfrac{2}{|I|} \right)^r \| \chi \|_{C^r}.  
\end{equation}

\begin{proof}[Proof of Main Lemma~\ref{t:XY model sensitive dependence}]
For each integer~$m \ge 0$, define the following intervals of~$\circle$:
$$ I_m^+ \=
\left[ - 2^{-(m + 11)^2}, 2^{-(m + 11)^2} \right]
\text{ and }
I_m^- \= I_m^+ + \tfrac{1}{2}. $$
Moreover, for~$m \ge 1$ put
$$ Y_m^+ = I_{m - 1}^+ \cup I_m^-,
Y_m^- = I_{m - 1}^- \cup I_m^+, $$
$$ \chi_m^+ \= \chi_{I_{m - 1}^+} + \chi_{I_m^-},
\text{ and }
\chi_m^- \= \chi_{I_{m - 1}^-} + \chi_{I_m^+}. $$
Note that for each~$\varsigma$ in~$\{ +, - \}$ we have~$Y_{m + 1}^\varsigma \subset Y_m^\varsigma$.
Finally, put~$\beta_0 \= 2^{10^3}$ and for~$\uvarsigma = \left(
  \varsigma(m) \right)_{m \in \N}$ in~$\signs$ put
$$ U (\uvarsigma)
\=
- \beta_0^{-1} + \sum_{m = 1}^{+ \infty} \left( \beta_{m - 1}^{-1} - \beta_m^{-1} \right)
\chi_m^{\varsigma(m)}. $$

Note that for each integer~$r \ge 1$ we have by~\eqref{e:C^r norm}
\begin{equation*}
  \begin{split}
\sum_{m = 1}^{+ \infty} \beta_{m - 1}^{-1} \left\| \chi_m^{\varsigma(m)} \right\|_{C^r}
& \le
2 \| \chi \|_{C^r} \sum_{m = 1}^{+ \infty} 2^{-(m + 9)^3} \left( \frac{2}{2
    \cdot 2^{-(m + 11)^2}} \right)^r
\\ & \le
2 \| \chi \|_{C^r} \sum_{m = 1}^{+ \infty} 2^{-((m + 9)^3 - r(m + 11)^2)}
\\ & <
+ \infty,  
  \end{split}
\end{equation*}
so the series defining~$U(\uvarsigma)$ converges uniformly with
respect to~$\| \cdot \|_{C^r}$.
It follows that~$U(\uvarsigma)$ is~$r$ times differentiable.
Since~$r \ge 1$ is arbitrary, this proves that~$U(\uvarsigma)$ is smooth.
To prove that~$U(\uvarsigma)$ depends continuously on~$\uvarsigma$
in the~$C^\infty$ topology, let~$r \ge 1$ and~$m_0 \ge 1$ be given
integers and let~$\uvarsigma = (\varsigma(m))_{m \in \N}$
and~$\uvarsigma' = (\varsigma'(m))_{m \in \N}$ be such that for
every~$k$ in~$\{1, \ldots, m_0 \}$ we have~$\varsigma(k) = \varsigma'(k)$.
Then by~\eqref{e:C^r norm} we have
\begin{equation*}
  \begin{split}
\left\| U \left( \uvarsigma \right)
-
U \left( \uvarsigma' \right) \right\|_{C^r}
& \le
\sum_{m = m_0 + 1}^{+ \infty} \beta_{m - 1}^{-1} \left\| \chi_{Y_m^+} - \chi_{Y_m^-} \right\|_{C^r}
\\ & \le
4 \| \chi \|_{C^r} \sum_{m = m_0 + 1}^{+ \infty} 2^{-(m + 9)^3} \left( \frac{2}{2
    \cdot 2^{-(m + 11)^2}} \right)^r
\\ & \le
4 \| \chi \|_{C^r} \sum_{m = m_0 + 1}^{+ \infty} 2^{-((m + 9)^3 - r(m
  + 11)^2)}.
  \end{split}
\end{equation*}
Since this last sum goes to~$0$ as~$m_0 \to + \infty$, it follows that~$U(\uvarsigma)$ depends continuously on~$\uvarsigma$
in the~$C^r$ topology.
Since~$r \ge 1$ is arbitrary, we conclude that~$U(\uvarsigma)$ depends continuously on~$\uvarsigma$ in the~$C^\infty$ topology.

To prove the estimate of the theorem, define for each integer~$m \ge
1$ the subsets of~$\circle$:
$$ M_m^+
\=
\left[ - \tfrac{2}{3} 2^{-(m + 10)^2}, - \tfrac{1}{3} 2^{-(m + 10)^2}
\right] \cup \left[ \tfrac{1}{3} 2^{-(m + 10)^2}, \tfrac{2}{3} 2^{-(m
    + 10)^2} \right] $$
and
$$ M_m^- \= M_m^+ + \tfrac{1}{2}. $$
Note that for each~$\varsigma$ in~$\{ +, - \}$ we have
$$ M_m^\varsigma \subset Y_m^{\varsigma} \setminus \left( Y_{m + 1}^+
  \cup Y_{m + 1}^- \right). $$
Let~$\uvarsigma = (\varsigma(m))_{m \in \N}$ be in~$\signs$ and
let~$\mu$ and~$\hmu$ be in~$\N$ such that
$$ \hmu \ge \mu \ge 1
\text{ and }
\varsigma(\mu) = \cdots = \varsigma(\hmu). $$
Fix~$\beta$ in~$\left[ \beta_{\mu}, \beta_{\hmu} \right]$, let~$m_0$
be the least integer~$m$ in~$\{\mu, \ldots, \hmu \}$ such
that~$\beta_m \ge \beta$, and put
$$ \rho_{\beta} \= \pi_* \rho_{\beta \cdot U(\uvarsigma)}
\text{ and }
Z(\beta) \= \int_{\circle} \exp(\beta \cdot U(\theta)) \dd \theta. $$
By~\eqref{e:Markov property} we have~$\rho_\beta = \frac{\exp(\beta
  \cdot U)}{Z(\beta)} \Leb$.

Noting that~$U(\uvarsigma)$ is constant equal to~$- \beta_{m_0}^{-1}$
on~$M_{m_0}^{\varsigma(m_0)}$, by~\eqref{e:Markov property} we have
\begin{equation}
\label{e:heavy atom}
  \begin{split}
\rho_\beta \left( M_{m_0}^{\varsigma(m_0)}
\right)
& =
\frac{1}{Z(\beta)} \Leb \left( M_{m_0}^{\varsigma(m_0)} \right)
\exp \left( - \beta \beta_{m_0}^{-1} \right)
\\ & =
\left(\tfrac{2}{3} \frac{1}{Z(\beta)} \right)
2^{- (m_0 + 10)^2} \exp \left( - \beta \beta_{m_0}^{-1} \right).
  \end{split}
\end{equation}

On the other hand, noting that~$U(\uvarsigma)$ is constant equal to~$- \beta_0^{-1}$ on~$\circle
\setminus Y_1^{\varsigma(1)}$, by~\eqref{l:Gibbs measures} we have
\begin{equation*}
  \begin{split}
\rho_{\beta} \left( \circle \setminus Y_1^{\varsigma(1)} \right)
& =
\frac{1}{Z(\beta)} \Leb \left( \circle \setminus
  Y_1^{\varsigma(1)} \right) \exp \left( - \beta \beta_0^{-1} \right)
\\ & \le
\frac{1}{Z(\beta)} \exp \left( - \beta \beta_0^{-1} \right).
  \end{split}
\end{equation*}
Combined with~\eqref{e:heavy atom} and~$m_0 \ge 1$, this implies
\begin{equation*}
  \begin{split}
\frac{\rho_{\beta} \left( \circle \setminus Y_1^{\varsigma(1)} \right)}{\rho_{\beta}
  \left( M_{m_0}^{\varsigma(m_0)} \right)}
& \le
\tfrac{3}{2} 2^{(m_0 + 10)^2} \exp \left( - \beta \left(
    \beta_0^{-1} - \beta_{m_0}^{-1} \right) \right)
\\ & \le
\tfrac{3}{2} 2^{(m_0 + 10)^2} \exp \left( - \tfrac{9}{10}
  \beta \beta_0^{-1} \right)
\\ & \le
\tfrac{3}{2} 2^{(m_0 + 10)^2 - \beta \beta_0^{-1}}
\end{split}
\end{equation*}
In the case~$m_0 = 1$ we have~$\beta = \beta_1$, and we obtain an
upper bound of
$$ \tfrac{3}{2} 2^{11^2 - 2^{11^3 - 10^3}}
\le
\tfrac{1}{2^{20}} 2^{-m_0}. $$
In the case~$m_0 \ge 2$ we have~$\beta \ge \beta_{m_0 - 1}$ and
\begin{equation*}
  \begin{split}
\beta \beta_0^{-1} - (m_0 + 10)^2
& \ge
2^{(m_0 + 9)^3 - 10^3} - (m_0 + 10)^2
\\ & \ge
2^{(m_0 - 1) \left( (m_0 + 9)^2 + 100 \right)} - (m_0 + 10)^2
\\ & \ge
2^{100} (m_0 + 9)^2 - (m_0 + 10)^2
\\ & \ge
100 m_0.   
  \end{split}
\end{equation*}
So, in all the cases we obtain
\begin{equation}
\label{e:ground floor}
\frac{\rho_{\beta} \left( \circle \setminus Y_1^{\varsigma(1)} \right)}{\rho_{\beta} \left( M_{m_0}^{\varsigma(m_0)} \right)}
\le
\tfrac{1}{2^{20}} 2^{-m_0}.
\end{equation}

Note that for every~$m$ in~$\N$ we have~$U(\uvarsigma) \le -
\beta_m^{-1}$ on
$$ \Delta_m
\=
Y_m^{\varsigma(m)} \setminus Y_{m + 1}^{\varsigma(m + 1)}. $$
So, by~\eqref{e:Markov property} for every integer~$k \ge - (m_0 -
1)$ we have
\begin{equation*}
  \begin{split}
\rho_\beta \left( \Delta_{m_0 + k} \right)
& \le
\frac{1}{Z(\beta)} \Leb (\Delta_{m_0 + k}) \exp \left( - \beta
  \beta_{m_0 + k}^{-1} \right)
\\ & \le
\left( 4 \frac{1}{Z(\beta)} \right) 2^{-(m_0 + k + 10)^2} \exp \left( - \beta \beta_{m_0 + k}^{-1} \right).
\end{split}
\end{equation*}
Combined with~\eqref{e:heavy atom}, we obtain
\begin{multline}
\label{e:atom relative weight} 
\frac{\rho_{\beta} (\Delta_{m_0 + k})}{\rho_{\beta} \left( M_{m_0}^{\varsigma(m_0)} \right)}
\\ \le
6 \cdot 2^{(m_0 + 10)^2 - (m_0 + k + 10)^2}
\exp \left(- \beta \left( \beta_{m_0 + k}^{- 1} - \beta_{m_0}^{-1} \right) \right).
\end{multline}
In the case~$k \ge 1$ the right-hand side is bounded from above by
$$ (6e) 2^{(m_0 + 10)^2 - (m_0 + k + 10)^2}
\le
(3e) 2^{-2k(m_0 + 10)}
\le
\tfrac{3e}{2^{20}} 2^{-km_0}
\le
\tfrac{1}{2^{16}} 2^{-km_0}. $$
On the other hand, in the case~$k \le - 2$ we have~$m_0 \ge 3$, $\beta
\ge \beta_{m_0 - 1}$,
\begin{equation*}
  \begin{split}
\beta \left( \beta_{m_0 + k}^{-1} - \beta_{m_0}^{-1} \right)
& \ge
\tfrac{9}{10} \beta \beta_{m_0 + k}^{-1}
\\ & \ge
\tfrac{9}{10} 2^{(m_0 + 9)^3 - (m_0 + k + 10)^3}
\\ & =
\tfrac{9}{10} 2^{(|k| - 1) \left( (m_0 + 9)^2 + (m_0 + 9) (m_0 + k + 10) + (m_0 + k + 10)^2 \right)}
\\ & \ge
\tfrac{9}{10} 2^{10|k|(m_0 + 10)},
  \end{split}
\end{equation*}
and therefore
\begin{equation*}
  \begin{split}
\frac{\rho_{\beta} (\Delta_{m_0 + k})}{\rho_{\beta}
  \left( M_{m_0}^{\varsigma(m_0)} \right)}
& \le
6 \cdot 2^{2|k|(m_0 + 10)} \exp \left(- \tfrac{9}{10} 2^{10|k|(m_0 +
    10)} \right)
\\ & \le
6 \cdot 2^{2|k|(m_0 + 10) - 2^{10|k|(m_0 + 10)}}
\\ & \le
6 \cdot 2^{-3|k|(m_0 + 10)}
\\ & \le
\tfrac{1}{2^{20}} 2^{- 3|k| m_0}.
  \end{split}
\end{equation*}
So for every~$k$ different from~$0$ and~$-1$ we have
\begin{equation*}
\frac{\rho_{\beta} (\Delta_{m_0 + k})}{\rho_{\beta}
  \left( M_{m_0}^{\varsigma(m_0)} \right)}
\le
\frac{1}{2^{16}} 2^{-|k|m_0}.
\end{equation*}
Put~$\tDelta_{m_0} = \Delta_1$ if~$m_0 = 1$, and~$\tDelta_{m_0} \=
\Delta_{m_0} \cup \Delta_{m_0 - 1}$ if~$m_0 \ge 2$.
Combined with~\eqref{e:ground floor} and $m_0 \ge 1$, the last estimate implies
\begin{equation}
\label{e:noncentral}
\begin{split}
\frac{\rho_{\beta} (\circle \setminus \tDelta_{m_0})}{\rho_{\beta}
  \left( M_{m_0}^{\varsigma(m_0)} \right)}
& =
\frac{\rho_{\beta} \left( \circle \setminus
      Y_1^{\varsigma(1)} \right)
+
\sum_{\substack{k = - (m_0 - 1) \\ k \neq 0, - 1}}^{+ \infty}
\rho_{\beta} (\Delta_{m_0 +
  k})}{\rho_{\beta} \left( M_{m_0}^{\varsigma(m_0)} \right)}
\\ & \le
\tfrac{1}{2^{20}} 2^{-m_0} + \tfrac{1}{2^{16}} 2 \cdot \sum_{k = 1}^{+ \infty} 2^{-k m_0}
\\ & \le
\tfrac{1}{2^{13}} 2^{- m_0}.
\end{split}
\end{equation}

Since~$U(\uvarsigma) \le - \beta_{m_0}^{-1}$ on~$\Delta_{m_0}$, by~\eqref{e:Markov property} we have
\begin{equation}
\label{e:weak side}
  \begin{split}
\rho_{\beta} \left( \Delta_{m_0} \setminus I_{m_0 - 1}^{\varsigma(m_0)} \right)
& \le
\frac{1}{Z(\beta)} \Leb \left( \Delta_{m_0} \setminus I_{m_0 -
    1}^{\varsigma(m_0)} \right) \exp \left( - \beta
  \beta_{m_0}^{-1} \right)
\\ & \le
\left( 2\frac{1}{Z(\beta)} \right)
2^{-(m_0 + 11)^2} \exp \left( - \beta \beta_{m_0}^{-1} \right).
\end{split}
\end{equation}
Combined with~\eqref{e:heavy atom} this implies
\begin{equation}
\label{e:noncentral atom}
\frac{\rho_{\beta} \left( \Delta_{m_0} \setminus I_{m_0 - 1}^{\varsigma(m_0)} \right)}{\rho_{\beta}
  \left( M_{m_0}^{\varsigma(m_0)} \right)}
\le
3 \cdot 2^{(m_0 + 10)^2 - (m_0 + 11)^2}
\le
\tfrac{1}{2^{19}} 2^{-m_0}.
\end{equation}
If~$m_0 = 1$, then~$\tDelta_{m_0} = \Delta_1$, so the previous
estimate combined with~\eqref{e:noncentral} and the
inclusions~$M_1^{\varsigma(\mu)} \subset I_0^{\varsigma(\mu)}
\setminus I_1^{\varsigma(\mu)} \subset \Delta_1 \cap I_0^{\varsigma(\mu)}$, implies
\begin{equation*}
  \begin{split}
\frac{\rho_{\beta} \left( \circle \setminus \left(
        \Delta_1 \cap I_0^{\varsigma(\mu)} \right) \right)}{\rho_{\beta}
  \left( \Delta_1 \cap I_0^{\varsigma(\mu)} \right)}
& =
\frac{\rho_{\beta} \left( \circle \setminus
    \Delta_1 \right) + \rho_{\beta} \left(
    \Delta_1 \setminus I_0^{\varsigma(\mu)} \right)}{\rho_{\beta}
  \left( \Delta_1 \cap I_0^{\varsigma(\mu)} \right)}
\\ & \le
\tfrac{1}{2^{13}} 2^{-1} + \tfrac{1}{2^{19}} \cdot 2^{-1}
\\ & \le
\tfrac{1}{2^{12}} 2^{-1}.
  \end{split}
\end{equation*}
This proves
$$ \rho_{\beta} \left( I_0^{\varsigma(\mu)} \right)
\ge
\rho_{\beta} \left( \Delta_1 \cap I_0^{\varsigma(\mu)} \right)
\ge
1 - \tfrac{1}{2^{12}} 2^{-1}, $$
and completes the proof of the theorem when~$m_0 = 1$.

It remains to consider the case where~$m_0 \ge 2$.
Suppose~$\varsigma(m_0 - 1) \neq \varsigma(\mu)$.
Then we have~$m_0 = \mu$ and therefore~$\beta = \beta_{m_0}$.
On the other hand, by~\eqref{e:atom relative weight} with~$k = -1$ we have
\begin{equation*}
  \begin{split}
\frac{\rho_{\beta} (\Delta_{m_0 - 1})}{\rho_{\beta}
\left( M_{m_0}^{\varsigma(m_0)} \right)}
& \le
(6e) 2^{(m_0 + 10)^2 - (m_0 + 9)^2} \exp \left(- 2^{(m_0 + 10)^3 - (m_0 + 9)^3} \right)
\\ & \le
(6e) 2^{2m_0 + 19} \exp \left(- 2^{3m_0 + 100} \right)
\\ & \le
6 \cdot 2^{2m_0 + 19} \exp(- (3m_0 + 100))
\\ & \le
6 \cdot 2^{- m_0 - 30}
\\ & \le
\tfrac{1}{2^{20}} 2^{-m_0}.
  \end{split}
\end{equation*}
Combined with~\eqref{e:noncentral}, \eqref{e:noncentral atom}, and the
inclusions
$$ M_{m_0}^{\varsigma(m_0)}
\subset
I_{m_0 - 1}^{\varsigma(\mu)} \setminus I_{m_0}^{\varsigma(\mu)}
\subset
\Delta_{m_0} \cap I_{m_0 - 1}^{\varsigma(\mu)}, $$
we obtain
\begin{equation*}
  \begin{split}
\frac{\rho_{\beta} \left( \circle \setminus \left(
        \Delta_{m_0} \cap I_{m_0 - 1}^{\varsigma(\mu)} \right) \right)}{\rho_{\beta}
  \left( \Delta_{m_0} \cap I_{m_0 - 1}^{\varsigma(\mu)} \right)}
& =
\frac{\rho_{\beta} \left( \circle \setminus \tDelta_{m_0} \right) + \rho_{\beta}
  \left( \Delta_{m_0 - 1} \right)}{\rho_{\beta} \left( \Delta_{m_0} \cap I_{m_0 - 1}^{\varsigma(\mu)} \right)}
\\ & \quad +
\frac{\rho_{\beta} \left( \Delta_{m_0}
    \setminus I_{m_0 - 1}^{\varsigma(\mu)} \right)}{\rho_{\beta}
  \left( \Delta_{m_0} \cap I_{m_0 - 1}^{\varsigma(\mu)} \right)}
\\ & \le
\tfrac{1}{2^{13}} 2^{-m_0} + \tfrac{1}{2^{20}} 2^{-m_0} + \tfrac{1}{2^{19}} 2^{-m_0}
\\ & \le
\tfrac{1}{2^{12}} 2^{-m_0}.
  \end{split}
\end{equation*}
This proves that
$$ \rho_{\beta} \left( I_{m_0 - 1}^{\varsigma(\mu)} \right)
\ge
\rho_{\beta} \left( \Delta_{m_0} \cap I_{m_0 - 1}^{\varsigma(\mu)} \right)
\ge
1 - \tfrac{1}{2^{12}} 2^{-m_0}, $$
and completes the proof of the theorem when~$m_0 \ge 2$
and~$\varsigma(m_0 - 1) \neq \varsigma(\mu)$.

It remains to consider the case where~$m_0 \ge 2$ and~$\varsigma(m_0 -
1) = \varsigma(\mu)$.
By~\eqref{e:Markov property} we obtain, as in~\eqref{e:heavy atom}
and~\eqref{e:weak side} with~$m_0$ replaced by~$m_0 - 1$,
\begin{equation*}
\rho_\beta \left( M_{m_0 - 1}^{\varsigma(m_0 - 1)}
\right)
=
\left(\tfrac{2}{3} \frac{1}{Z(\beta)} \right)
2^{- (m_0 + 9)^2} \exp \left( - \beta \beta_{m_0 - 1}^{-1} \right)
\end{equation*}
and
\begin{equation*}
\rho_{\beta} \left( \Delta_{m_0 - 1} \setminus I_{m_0 - 2}^{\varsigma(\mu)} \right)
\le
\left( 2\frac{1}{Z(\beta)} \right)
2^{-(m_0 + 10)^2} \exp \left( - \beta \beta_{m_0 - 1}^{-1} \right).
\end{equation*}
Therefore,
\begin{equation*}
\frac{\rho_{\beta} \left( \Delta_{m_0 - 1} \setminus I_{m_0 - 2}^{\varsigma(\mu)} \right)}{\rho_\beta \left( M_{m_0 - 1}^{\varsigma(m_0 - 1)}
\right)}
\le
3 \cdot 2^{(m_0 + 9)^2 - (m_0 + 10)^2}
\le
\tfrac{1}{2^{17}} 2^{-m_0}.
\end{equation*}
Combined with~\eqref{e:noncentral}, \eqref{e:noncentral atom}, and the inclusion
$$ M_{m_0}^{\varsigma(\mu)} \cup M_{m_0 - 1}^{\varsigma(\mu)}
\subset
I_{m_0 - 2}^{\varsigma(\mu)} \setminus I_{m_0}^{\varsigma(\mu)}
\subset
\tDelta_{m_0} \cap I_{m_0 - 2}^{\varsigma(\mu)}, $$
we obtain
\begin{equation*}
\begin{split}
\frac{\rho_{\beta} \left( \circle \setminus \left(
        \tDelta_{m_0} \cap I_{m_0 - 2}^{\varsigma(\mu)} \right) \right)}{\rho_{\beta}
  \left( \tDelta_{m_0} \cap I_{m_0 - 2}^{\varsigma(\mu)} \right)}
& =
\frac{\rho_{\beta} \left( \circle \setminus
    \tDelta_{m_0} \right) + \rho_{\beta}
  \left( \Delta_{m_0 - 1}  \setminus I_{m_0 - 2}^{\varsigma(\mu)} \right)}{\rho_{\beta} \left( \tDelta_{m_0} \cap I_{m_0 - 2}^{\varsigma(\mu)} \right)}
\\ & \quad +
\frac{\rho_{\beta} \left( \Delta_{m_0} \setminus I_{m_0 - 1}^{\varsigma(\mu)} \right)}{\rho_{\beta}
  \left( \tDelta_{m_0} \cap I_{m_0 - 2}^{\varsigma(\mu)} \right)}
\\ & \le
\tfrac{1}{2^{13}} 2^{-m_0} + \tfrac{1}{2^{19}} 2^{-m_0} + \tfrac{1}{2^{17}} 2^{-m_0}
\\ & \le
\tfrac{1}{2^{12}} 2^{-m_0}.
  \end{split}
\end{equation*}
This proves
$$ \rho_{\beta} \left( I_{m_0 - 2}^{\varsigma(\mu)} \right)
\ge
\rho_{\beta} \left( \tDelta_{m_0} \cap I_{m_0 - 2}^{\varsigma(\mu)} \right)
\ge
1 - \tfrac{1}{2^{12}} 2^{-m_0}, $$
and completes the proof of the theorem when~$m_0 \ge 2$
and~$\varsigma(m_0 - 1) = \varsigma(\mu)$.
The proof of the lemma is thus complete.
\end{proof}

\begin{proof}[Proof of Theorem~\ref{t:XY model sensitive dependence}]
Fix~$\uvarsigma_0 = (\varsigma_0(m))_{m \in \N}$ in~$\signs$,
put~$U_0 \= U(\uvarsigma_0)$, and fix a integer~$m_0 \ge 1$.
Let~$(\beta_m)_{m \in \N}$ be the sequence given by Main Lemma~\ref{t:XY model sensitive dependence}.
Replacing~$(\hbeta_{\ell})_{\ell \in \N}$ by a subsequence if necessary,
assume this sequence is strictly increasing, that~$\hbeta_1 \ge
\beta_{m_0 + 1}$, and that for every~$m \ge 1$ there is at most~$1$ value
of~$\ell$ such that~$\hbeta_{\ell}$ is in~$[\beta_m, \beta_{m + 1}]$.
For each~$\ell$ in~$\N$ let~$m(\ell)$ be the largest integer~$m \ge 1$ such
that~$\beta_m \le \hbeta_{\ell}$.
Note that~$m(1) \ge m_0 + 1$ and that for every~$\ell$ in~$\N$ the number~$\hbeta_{\ell}$ is
in~$[\beta_{m(\ell)}, \beta_{m(\ell + 1) - 1}]$.

Let~$\uvarsigma = \left( \varsigma(m) \right)_{m \in \N}$ in~$\signs$ be such that
for every~$m$ in~$\{1, \ldots, m_0 \}$ we have~$\varsigma(m) =
\varsigma_0(m)$, and such that for every even (resp. odd) $\ell$ in~$\N$ and every~$m$ in~$[m(\ell),
m(\ell + 1) - 1]$ we have~$\varsigma(m) = +$ (resp.~$\varsigma(m) =
-$).
Then for every~$\ell$ in~$\N$ we have
$$ \varsigma(m(\ell)) = \cdots = \varsigma(m(\ell + 1) - 1), $$
and by Main Lemma~\ref{t:XY model sensitive dependence} we have
$$ \lim_{k \to + \infty} \pi_* \rho_{\hbeta_{2k} \cdot U(\uvarsigma)}
=
\delta_0
\text{ and }
\lim_{k \to + \infty} \pi_* \rho_{\hbeta_{2k + 1} \cdot U(\uvarsigma)}
=
\delta_{\tfrac{1}{2}}. $$
In view of Lemma~\ref{l:Gibbs measures}, this implies that~$\rho_{\hbeta_{2k} \cdot
  U(\uvarsigma)}$ converges to the ferromagnetic phase as~$k \to + \infty$, and
that~$\rho_{\hbeta_{2k + 1} \cdot U(\uvarsigma)}$ converges to the
antiferromagnetic phase as~$k \to + \infty$.
Since~$m_0 \ge 1$ is an arbitrary integer, and since the first~$m_0$
elements of~$\uvarsigma$ and~$\uvarsigma_0$ coincide, it follows
that~$U = U(\uvarsigma)$ can be chosen arbitrarily close to~$U_0$
in the~$C^{\infty}$ topology.
\end{proof}

\begin{rema}
\label{r:sensitive potential}
If in the proof of Theorem~\ref{t:XY model sensitive
  dependence} we choose~$\uvarsigma_0$ as the constant sequence equal
to~$+$ (resp.~$-$), then by Main Lemma~\ref{t:XY model sensitive
  dependence} it follows that the one-parameter family of Gibbs measures~$\left( \rho_{\beta \cdot  U_0} \right)_{\beta > 0}$ converges to the ferromagnetic
(resp. antiferromagnetic) phase as~$\beta \to + \infty$.
In particular, for such~$\uvarsigma_0$ the interaction~$\Phi_{U_0}$ is
non-chaotic.
On the other hand, if we choose~$\uvarsigma_0$ having infinitely
many~$+$'s and infinitely many~$-$'s, then the one-parameter family of
Gibbs measures~$\left( \rho_{\beta \cdot
    U_0} \right)_{\beta > 0}$ accumulates at the same time on the
ferromagnetic and the antiferromagnetic phases as~$\beta \to +
\infty$.
Thus, for such~$\uvarsigma_0$ the interaction~$\Phi_{U_0}$ is chaotic.
\end{rema}

\begin{rema}
\label{r:analytic selection}
In the case~$U$ is real analytic, the interaction~$\Phi_U$ is non-chaotic.
In fact, in this case the one-parameter family of Gibbs measures~$\left( \rho_{\beta \cdot U} \right)_{\beta > 0}$ converges as~$\beta \to + \infty$ to a
measure~$\rho_\infty$ described as follows.
If~$U$ is constant, then for every~$\beta > 0$ we have~$\rho_{\beta
  \cdot U} = \bigotimes_{\Z} \Leb$, and therefore~$\rho_{\infty} = \bigotimes_{\Z} \Leb$.
Assume~$U$ is nonconstant, and note that by Lemma~\ref{l:Gibbs measures} the measure~$\rho_{\infty}$ is symmetric and~$D_* \rho_{\infty}$ is a product measure.
Thus, to describe~$\rho_\infty$ we just need to describe its projection by~$\pi$.
Put~$\Umax \= \sup_{\circle} U$, and for each~$c$ in the finite set~$U^{-1}(\Umax)$ denote by~$\ell(c)$ the least integer~$\ell \ge 1$ such
that~$D^{\ell}U(c) \neq 0$.
Note that~$\ell(c)$ is even and that
$$ \omega(c)
\=
- \frac{D^{\ell(c)}U(c)}{\ell(c)!} > 0. $$
Putting
\begin{equation*}
\ellmax
\=
\max \left\{ \ell(c) : c \in U^{-1}(\Umax) \right\},
\end{equation*}
the measure~$\rho_\infty$ is uniquely determined by
\begin{equation}
\label{e:limit bond}
\pi_* \rho_{\infty}
=
\frac{\sum_{\substack{c \in U^{-1}(\Umax) \\ \ell(c) = \ell_{\max}}}
  \omega(c)^{- \frac{1}{\ellmax}} \delta_c}{\sum_{\substack{c \in
      U^{-1}(\Umax) \\ \ell(c) = \ell_{\max}}}
  \omega(c)^{- \frac{1}{\ellmax}}}.
\end{equation}

To prove this, fix~$\kappa$ in~$(0, 1)$ and note that
$$ \lim_{\beta \to + \infty} \exp(- \beta \Umax) \beta^{\frac{1}{\ellmax}}
\int_{\circle \setminus \bigcup_{c \in U^{-1}(\Umax)} B \left( c, \beta^{-
    \frac{\kappa}{\ell(c)}} \right)} \exp(\beta \cdot U(x)) \dd x
=
0, $$
and that for every~$c$ in~$U^{-1}(\Umax)$
\begin{multline*}
\lim_{\beta \to + \infty} \exp(- \beta \Umax) \beta^{\frac{1}{\ell(c)}}
\int_{B \left( c, \beta^{-\frac{\kappa}{\ell(c)}} \right)} \exp(\beta \cdot U(x)) \dd x
\\ =
\omega(c)^{- \frac{1}{\ell(c)}} \int_{\R} \exp \left( -y^{\ell(c)}
\right) \dd y.
\end{multline*}
These computations prove that
\begin{multline*}
\lim_{\beta \to + \infty} \exp(\beta U_{\max}) \beta^{-
  \frac{1}{\ell_{\max}}} \int_{\circle} \exp(\beta \cdot U(x)) \dd x
\\ =
\left( \int_{\R} \exp \left( -y^{\ell_{\max}} \right) \dd y \right)
\sum_{\substack{c \in U^{-1}(\Umax) \\ \ell(c) = \ell_{\max}}}
\omega(c)^{- \frac{1}{\ellmax}} \delta_c.
\end{multline*}
Combined with~\eqref{e:Markov property}, this implies~\eqref{e:limit bond}.
\end{rema}

\section{Symbolic space}
\label{s:symbolic}
This section is devoted to the proof of Main Lemma~\ref{t:symbolic
  sensitive dependence}.
As mentioned in the introduction, the deduction of
Theorem~\ref{t:symbolic sensitive dependence} from Main Lemma~\ref{t:symbolic sensitive dependence} is analogous to that of
Theorem~\ref{t:XY model sensitive dependence} from Main Lemma~\ref{t:XY model sensitive dependence} given
in~\S\ref{s:XY model}, and we
omit it.

We first prove the following weaker version of Main Lemma~\ref{t:symbolic
  sensitive dependence}, whose proof contains some of the main ideas,
but is simpler.
Corollary~\ref{c:arbitrary accumulation} follows easily from (the proof
of) this result.
\begin{generic}[Main Lemma~\ref{t:symbolic sensitive dependence}']
There is a continuous family of Lipschitz continuous
potentials~$\left( \varphi(\uvarsigma) \right)_{\uvarsigma \in
  \signs}$, complementary open subsets~$U^+$ and~$U^-$ of~$\Sigma$,
and an increasing sequence of positive numbers~$(\beta_m)_{m \in \N}$ converging
to~$+ \infty$, such that the following property holds: For every~$\uvarsigma =
(\varsigma(m))_{m \in \N}$ in~$\signs$ and every integer~$m \ge 1$,
every translation invariant Gibbs measure~$\rho$ for the
potential~$\beta_m \cdot \varphi(\uvarsigma)$ satisfies
$$ \rho(U^{\varsigma(m)})
\ge
\tfrac{2}{3}. $$
\end{generic}

The proof of Main Lemma~\ref{t:symbolic sensitive dependence}' is given
in~\S\ref{ss:proof of symbolic chaotic dependence'}, that of
Corollary~\ref{c:arbitrary accumulation} in~\S\ref{ss:ground states},
and that of Main Lemma~\ref{t:symbolic sensitive dependence} in~\S\ref{ss:proof of symbolic chaotic dependence}.

A subset of~$\Sigma$ is \emph{clopen} if it is at the same time open
and closed.

\subsection{Proof of Main Lemma~\ref{t:symbolic sensitive dependence}'}
\label{ss:proof of symbolic chaotic dependence'}
For a function~$\varphi \colon \Sigma \to \R$, put
$$ \| \varphi \|_{\infty}
\=
\sup \{ |\varphi(x)| : x \in \Sigma \}, $$
$$ | \varphi |_{\Lip}
\=
\sup \left\{ \frac{|\varphi(x) - \varphi(x')|}{\dist(x, x')} : x, x'
  \in \Sigma, x \neq x' \right\}, $$
and
$$\| \varphi \|_{\Lip}
\=
\| \varphi \|_{\infty} + | \varphi |_{\Lip}. $$
A function~$\varphi \colon \Sigma \to \R$ is \emph{Lipschitz continuous} if~$\| \varphi \|_{\Lip} < + \infty$.
Denote by~$\Lip$ the space of all Lipschitz continuous functions.
Then~$\| \cdot \|_{\Lip}$ is a norm on~$\Lip$, for which~$\Lip$ is a Banach space.

The following lemma follows easily from a well-known result, see for
example~\cite[Corollary~$1$]{AizLie81} and also~\cite[Proposition~$29$(ii)]{ConLopThi01} for the case~$G = \N_0$.
We provide the short proof for completeness.

\begin{lemm}
\label{l:ground states maximize entropy}
Let~$X$ and~$X'$ be disjoint compact subsets of~$\Sigma$, each of
which is invariant by~$\sigma$, and such
that~$\htop \left( \sigma|_X \right) > \htop \left( \sigma|_{X'} \right)$.
Moreover, let~$\varphi \colon \Sigma \to \R$ be a Lipschitz continuous
function attaining its maximum precisely on~$X \cup X'$. 
Then for every~$\delta$ in~$(0, 1)$ and every neighborhood~$U$ of~$X$
there is~$\beta_0 > 0$ such that for every~$\beta \ge \beta_0$ and
every translation invariant Gibbs measure~$\rho$ for the potential $ \beta \cdot \varphi$  we have
$$ \rho(U) \ge 1 - \delta. $$
\end{lemm}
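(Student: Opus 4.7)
The plan is to argue by contradiction using the variational principle, combined with upper semi-continuity and affinity of the entropy functional on~$\sM_\sigma$, and the strict entropy gap~$\htop(\sigma|_X) > \htop(\sigma|_{X'})$.

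First I would shrink~$U$ to an open neighborhood~$V$ of~$X$ whose closure is disjoint from~$X'$; such a~$V$ exists because~$X$ and~$X'$ are disjoint compact sets, and proving~$\rho(V) \ge 1 - \delta$ for all sufficiently large~$\beta$ evidently suffices. Next, set~$M \= \max_{\Sigma} \varphi$, which by hypothesis is attained precisely on~$X \cup X'$. Because shift actions on subshifts are expansive, the entropy functional on~$\sM_\sigma$ is upper semi-continuous, so there exists~$\nu_X \in \sM_\sigma$ with~$\supp(\nu_X) \subset X$ and~$h_{\nu_X} = \htop(\sigma|_X)$. Plugging~$\nu_X$ into the variational definition of~$P(\beta \varphi)$, and using that every translation invariant Gibbs measure~$\rho$ for the potential~$\beta \varphi$ is an equilibrium state, I obtain
\begin{equation*}
h_\rho + \beta \int \varphi \dd \rho
\;\ge\;
\htop(\sigma|_X) + \beta M.
\end{equation*}
Since~$h_\rho$ is bounded by the finite quantity~$\htop(\sigma)$ and~$\int \varphi \dd \rho \le M$, this forces~$\int \varphi \dd \rho \to M$ uniformly in~$\rho$ as~$\beta \to + \infty$, and in addition~$h_\rho \ge \htop(\sigma|_X)$ for every such~$\rho$.

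Now suppose for contradiction that the conclusion fails: there exist~$\beta_n \to + \infty$ and translation invariant Gibbs measures~$\rho_n$ for~$\beta_n \varphi$ with~$\rho_n(V) < 1 - \delta$. After passing to a weak* convergent subsequence~$\rho_n \to \rho_\infty \in \sM_\sigma$, continuity of~$\varphi$ and the previous paragraph give~$\int \varphi \dd \rho_\infty = M$, so~$\supp(\rho_\infty) \subset X \cup X'$; the Portmanteau theorem applied to the closed set~$\Sigma \setminus V$ yields~$\rho_\infty(\Sigma \setminus V) \ge \delta$. Decomposing
\begin{equation*}
\rho_\infty
\;=\;
a\, \nu_{+} + (1 - a)\, \nu_{-},
\end{equation*}
with~$\nu_{+} \in \sM_\sigma$ supported on~$X$ and~$\nu_{-} \in \sM_\sigma$ supported on~$X'$, the inclusions~$X \subset V$ and~$X' \cap V = \emptyset$ give~$1 - a = \rho_\infty(\Sigma \setminus V) \ge \delta$, hence~$a \le 1 - \delta$. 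By affinity of entropy together with~$h_{\nu_{+}} \le \htop(\sigma|_X)$ and~$h_{\nu_{-}} \le \htop(\sigma|_{X'})$,
\begin{equation*}
h_{\rho_\infty}
\;\le\;
a\, \htop(\sigma|_X) + (1 - a)\, \htop(\sigma|_{X'}),
\end{equation*}
and upper semi-continuity of entropy gives~$\limsup_{n \to \infty} h_{\rho_n} \le h_{\rho_\infty}$. Combined with~$h_{\rho_n} \ge \htop(\sigma|_X)$, this yields~$(1 - a)\bigl(\htop(\sigma|_X) - \htop(\sigma|_{X'})\bigr) \le 0$, forcing~$a = 1$ and contradicting~$a \le 1 - \delta < 1$.

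The main obstacle is locating clean references for upper semi-continuity and affinity of the entropy functional when the acting group is~$G = \Z^d$ or~$G = \N_0^d$; both are standard for expansive actions of amenable groups on compact metric spaces (see, e.g., Keller~\cite{Kel98} or Walters), but should be cited explicitly. Everything else is a routine compactness-plus-convexity argument once the strict entropy gap is in play.
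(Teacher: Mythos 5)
Your proof is correct and follows essentially the same route as the paper's: the variational principle forces $\int \varphi \dd \rho_\beta \to \max_\Sigma \varphi$ and $h_{\rho_\beta} \ge \htop(\sigma|_X)$, then compactness and upper semi-continuity of entropy, combined with the strict entropy gap, confine any limit measure to~$X$. The only difference is presentational: you spell out, via the convex decomposition $\rho_\infty = a\,\nu_+ + (1-a)\,\nu_-$ and affinity of entropy, the final step that the paper leaves implicit when it concludes that the accumulation measure is supported on~$X$.
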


In the proof of this lemma, as well as in the proof of
Lemma~\ref{l:uniform dependence of equilibria} below, we use the fact
that the entropy function~$\nu \mapsto h_{\nu}$ is upper
semi-continuous on~$\sM_{\sigma}$, see for example~\cite[Example~$4.2.6$]{Kel98}.

\begin{proof}[Proof of Lemma~\ref{l:ground states maximize entropy}]
It is enough to show that for every family of translation invariant
Gibbs measures~$(\rho_{\beta})_{\beta > 0}$ for the potentials $\beta\cdot \varphi$, every accumulation measure   as~$\beta \to + \infty$ is supported on~$X$.
Let~$(\beta_\ell)_{\ell = 1}^{+ \infty}$ be a sequence of positive
numbers such that~$\beta_\ell \to + \infty$ as~$\ell \to + \infty$ and such
that~$\rho_\ell \= \rho_{\beta_{\ell}}$ converges to a measure~$\rho$
as~$\ell \to + \infty$.
Putting~$m \= \sup \{ \varphi(x) : x \in \Sigma \}$, for
every~$\nu$ in~$\sM_{\sigma}$ that is supported on~$X \cup X'$ we have
$$ \int \varphi \dd \rho
=
\lim_{\ell \to + \infty} \left(
  \frac{h_{\rho_\ell}}{\beta_\ell} + \int \varphi  \dd \rho_{\ell} \right)
\ge
\lim_{\ell \to + \infty}
\left( \frac{h_\nu}{\beta_{\ell}} + \int \varphi \dd \nu \right)
=
\int \varphi \dd \nu = m. $$
It follows that~$\rho$ is supported on~$X \cup X'$.
On the other hand, for every~$\ell \ge 1$ we have
$$ h_{\rho_\ell} + \beta_\ell m
\ge
h_{\rho_\ell} + \beta_\ell \int \varphi \dd \rho_{\ell}
\ge
h_\nu + \beta_{\ell} \int \varphi \dd \nu
=
h_\nu + \beta_{\ell} m, $$
so~$h_{\rho_{\ell}} \ge h_{\nu}$.
Since the entropy function is upper semi-continuous, it follows
that~$h_{\rho} \ge h_\nu$.
Since this holds for every invariant probability measure~$\nu$ supported
on~$X \cup X'$, and by hypothesis~$\htop \left( \sigma|_{X} \right) >
\htop \left( \sigma|_{X'} \right)$, we conclude that~$h_\rho = \htop
\left( \sigma|_{X} \right)$ and that~$\rho$ is supported on~$X$.
\end{proof}

\begin{lemm}
\label{l:uniform dependence of equilibria}
Let~$\varphi_0 \colon \Sigma \to \R$ be a Lipschitz continuous function and
let~$\beta_0 \ge 0$ be given.
Then for every~$\delta > 0$ and every continuous function~$\psi \colon
\Sigma \to \R$ there is~$\varepsilon > 0$ such that for every
Lipschitz continuous function~$\varphi \colon \Sigma \to \R$ satisfying~$\|
\varphi - \varphi_0 \|_{\Lip} \le \varepsilon$ and every translation invariant Gibbs measure~$\rho$ for the potential $\beta_0\cdot \varphi$ there is a translation invariant Gibbs measure~$\nu$ for the potential $\beta_0\cdot \varphi_0$ such that 
$$ \left| \int \psi \dd \rho
-
\int \psi \dd \nu\right|
<
\delta. $$
Moreover, if the dimension~$d$ is~$1$, then for every~$\beta$ in~$[0,
\beta_0]$ we have
$$ \left| \int \psi \dd \rho_{\beta \cdot \varphi}
-
\int \psi \dd \rho_{\beta \cdot \varphi_0} \right|
<
\delta. $$
\end{lemm}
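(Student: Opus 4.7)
The plan is to argue by contradiction, using compactness of $\sM_\sigma$ together with the variational principle and the upper semi-continuity of the entropy function $\nu \mapsto h_\nu$.

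For the first statement, suppose the conclusion fails. Then there exist $\delta > 0$, a continuous $\psi$, Lipschitz potentials $\varphi_n$ with $\|\varphi_n - \varphi_0\|_{\Lip} \to 0$, and translation invariant Gibbs measures $\rho_n$ for $\beta_0 \cdot \varphi_n$ such that for every translation invariant Gibbs measure $\nu$ for $\beta_0 \cdot \varphi_0$,
$$ \left|\int \psi \dd\rho_n - \int \psi \dd\nu\right| \ge \delta. $$
By compactness of $\sM_\sigma$ in the weak* topology, pass to a subsequence so that $\rho_n \to \rho_\infty$. The main step is to verify that $\rho_\infty$ is an equilibrium state for $\beta_0 \cdot \varphi_0$: for any $\nu \in \sM_\sigma$ the variational principle gives
$$ h_{\rho_n} + \beta_0 \int \varphi_n \dd\rho_n \ge h_\nu + \beta_0 \int \varphi_n \dd\nu. $$
Since $\varphi_n \to \varphi_0$ uniformly, the integral terms converge to $\beta_0 \int \varphi_0 \dd\rho_\infty$ and $\beta_0 \int \varphi_0 \dd\nu$ respectively. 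Upper semi-continuity of entropy gives $h_{\rho_\infty} \ge \limsup_n h_{\rho_n}$, and taking $\limsup$ on both sides yields
$$ h_{\rho_\infty} + \beta_0 \int \varphi_0 \dd\rho_\infty \ge h_\nu + \beta_0 \int \varphi_0 \dd\nu, $$
so $\rho_\infty$ is an equilibrium state for $\beta_0 \cdot \varphi_0$, hence a translation invariant Gibbs measure. But then $\int \psi \dd\rho_n \to \int \psi \dd\rho_\infty$, contradicting the choice of $\rho_n$ with $\nu = \rho_\infty$.

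For the second statement, assume again for contradiction that there exist $\varphi_n$ with $\|\varphi_n - \varphi_0\|_{\Lip} \to 0$ and $\beta_n \in [0, \beta_0]$ such that
$$ \left|\int \psi \dd\rho_{\beta_n \cdot \varphi_n} - \int \psi \dd\rho_{\beta_n \cdot \varphi_0}\right| \ge \delta. $$
Pass to a subsequence so that $\beta_n \to \beta_* \in [0, \beta_0]$ and both $\rho_{\beta_n \cdot \varphi_n}$ and $\rho_{\beta_n \cdot \varphi_0}$ converge weakly, say to $\rho_\infty$ and $\nu_\infty$. Since $\beta_n \varphi_n$ and $\beta_n \varphi_0$ both converge uniformly to $\beta_* \varphi_0$, the same variational-principle argument as above shows that $\rho_\infty$ and $\nu_\infty$ are both equilibrium states for $\beta_* \cdot \varphi_0$. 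Since $d = 1$ and $\varphi_0$ is Lipschitz continuous, the equilibrium state is unique, so $\rho_\infty = \nu_\infty$, contradicting the uniform separation by $\delta$.

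The delicate point throughout is that the variational-principle inequality only passes cleanly to the limit because uniform convergence of the potentials controls the integral term while upper semi-continuity of entropy, rather than continuity, controls $h_{\rho_n}$; this one-sided control is exactly what is needed to recognize $\rho_\infty$ as an equilibrium state, and once that is established, everything else is a standard compactness wrap-up.
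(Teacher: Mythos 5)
Your proposal is correct and follows essentially the same route as the paper: argue by contradiction, use weak* compactness of $\sM_\sigma$, and identify the limit measure as an equilibrium state (hence a translation invariant Gibbs measure) via upper semi-continuity of entropy, with uniqueness of the equilibrium state in dimension $1$ giving the uniformity in $\beta$. The only cosmetic difference is that the paper isolates the limit step in a separate lemma proved via continuity of the pressure $\varphi \mapsto P(\varphi)$ on $\Lip$, whereas you pass the variational inequality against an arbitrary $\nu \in \sM_\sigma$ directly to the limit; the two arguments are equivalent.
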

The proof of Lemma \ref{l:continuity of translation invariant Gibbs measures} is after the proof of the following lemma.
\begin{lemm}\label{l:continuity of translation invariant Gibbs measures}
Let~$\varphi \colon \Sigma \to \R$ a Lipschitz continuous function and
$(\varphi_\ell)_{\ell\in \N}$ be a sequence in~$\Lip$ converging
to~$\varphi$.
For every sequence of translation invariant Gibbs measures
$(\rho_\ell)_{\ell\in \N}$ for the potentials in the sequence
$(\varphi_\ell)_{\ell\in \N}$, every accumulation point is an
translation invariant Gibbs measure for the potential $\varphi$.
\end{lemm}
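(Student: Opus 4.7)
The plan is to combine the variational characterization of translation invariant Gibbs measures (as equilibrium states) with the continuity of topological pressure in the potential and the upper semi-continuity of the entropy map. First I would pass to a subsequence so that $\rho_\ell$ converges in the weak$^*$ topology to some probability measure $\rho$. Since $\sM_\sigma$ is a closed subset of $\sM$, the limit $\rho$ is translation invariant.

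Next I would verify that $P(\varphi_\ell) \to P(\varphi)$. The variational principle gives the standard estimate $|P(\psi) - P(\psi')| \le \|\psi - \psi'\|_\infty$ for every pair of continuous functions~$\psi, \psi'$, obtained by taking the supremum over $\sM_\sigma$ of the inequality $h_\nu + \int \psi \dd \nu \le h_\nu + \int \psi' \dd \nu + \|\psi - \psi'\|_\infty$. Since $\|\varphi_\ell - \varphi\|_\infty \le \|\varphi_\ell - \varphi\|_{\Lip} \to 0$, pressure continuity follows.

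Since each $\rho_\ell$ is an equilibrium state for~$\varphi_\ell$, we have
$$
P(\varphi_\ell) = h_{\rho_\ell} + \int \varphi_\ell \dd \rho_\ell.
$$
To pass to the limit on the right, write
$$
\int \varphi_\ell \dd \rho_\ell = \int (\varphi_\ell - \varphi) \dd \rho_\ell + \int \varphi \dd \rho_\ell.
$$
The first term is bounded in absolute value by $\|\varphi_\ell - \varphi\|_\infty$ and tends to~$0$, while the second converges to~$\int \varphi \dd \rho$ by the weak$^*$ convergence $\rho_\ell \to \rho$ and the continuity of~$\varphi$. Combined with the upper semi-continuity of $\nu \mapsto h_\nu$ on~$\sM_\sigma$, taking the limsup yields
$$
P(\varphi) \le h_\rho + \int \varphi \dd \rho.
$$
The reverse inequality is immediate from the variational principle, so equality holds and $\rho$ is an equilibrium state for~$\varphi$, i.e.\ a translation invariant Gibbs measure, as required.

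The argument is essentially routine and I do not expect a substantial obstacle; the three ingredients (Lipschitz dependence of pressure on the potential in the uniform norm, weak$^*$ continuity of integration against a uniformly convergent family of continuous functions, and upper semi-continuity of entropy) are all standard, and the last of these is already invoked in the proof of Lemma~\ref{l:ground states maximize entropy}.
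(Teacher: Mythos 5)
Your proof is correct and follows essentially the same route as the paper: pass to a convergent subsequence, use continuity of the pressure in the potential together with upper semi-continuity of the entropy map to conclude that the limit measure attains the supremum in the variational principle for~$\varphi$. The only (harmless) difference is that you derive the pressure continuity from the elementary estimate $|P(\psi)-P(\psi')|\le\|\psi-\psi'\|_\infty$ instead of citing the reference the paper uses, which makes the step self-contained.
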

\begin{proof}
To prove this, we use the fact that the pressure~$P(\varphi)$ depends
continuously on~$\varphi$ in~$\Lip$, see for
example~\cite[Theorem~$4.1.10$ b)]{Kel98}.
Let assume that~$\rho_\ell$ converges to a measure~$\rho$
as~$\ell \to + \infty$.
Using that the entropy function is upper semi-continuous
on~$\sM_\sigma$, we have
$$ P(\varphi)
=
\lim_{\ell \to + \infty} P(\varphi_{\ell})
=
\lim_{\ell \to + \infty} \left( h_{\rho_\ell} + \int \varphi_\ell \dd \rho_\ell  \right)
\le
h_{\rho} + \int \varphi \dd \rho
\le
P(\varphi), $$
so~$h_{\rho} + \int \varphi \dd \rho = P(\varphi)$ and therefore~$\rho$ is a translation invariant Gibbs measure for $\varphi$.

\end{proof}
\begin{proof}[Proof of Lemma \ref{l:uniform dependence of equilibria}]
By contradiction. Assume that there are $\delta>0$,  a continuous
function $\psi \colon \Sigma\to \R$, a sequence of Lipschitz continuous
functions $(\varphi_\ell)_{\ell \in \N}$ converging to $\varphi_0$
in~$\Lip$ and a sequence of translation invariant Gibbs measures  $(\rho_\ell)_{\ell \in \N}$  for the sequence of potential $\beta_0\cdot \varphi_\ell$ such that for every translation invariant Gibbs measure $\nu$ for the potential $\beta_0\cdot \varphi_0$ one has 
$$ \left| \int \psi \dd \rho_\ell
-
\int \psi \dd \nu \right|
\ge 
\delta. $$
By compactness the sequence of measures $(\rho_\ell)_{\ell \in \N}$ has an accumulation point which is a translation invariant Gibbs measure for the potential $\beta_0\cdot \varphi_0$ (Lemma \ref{l:continuity of translation invariant Gibbs measures}). This is in contradiction with the previous inequality.

In dimension~$1$, the same argument together with the uniqueness of the
translation invariant Gibbs measures gives the uniformity in $\beta$ in $[0,\beta_0]$. 
\end{proof}

\begin{proof}[Proof of Main Lemma~\ref{t:symbolic sensitive dependence}']
Given a compact subset~$Y$ of~$\Sigma$, denote by~$\chi_Y \colon \Sigma
\to \R$ the function defined by~$\chi_Y(x) \= - \dist(x, Y)$.
A straightforward computation shows that~$\left\| \chi_Y \right\|_{\Lip} \le 2$.

Let~$(X_m^+)_{m \in \N_0}$ and~$(X_m^-)_{m \in \N_0}$ be decreasing
sequences of compact subsets of~$\Sigma$ that are invariant
by~$\sigma$, such that~$X_0^+$ and~$X_0^-$ are disjoint, and such that
for every~$m \ge 0$ we have
\begin{equation}
  \label{e:entropy condition'}
\htop \left( \sigma|_{X_{m + 1}^-} \right)
<
\htop \left( \sigma|_{X_m^+} \right)
\text{ and }
\htop \left( \sigma|_{X_{m + 1}^+} \right)
<
\htop \left( \sigma|_{X_m^-} \right).\footnote{There
  are various choices for~$(X_m^+)_{m \in \N_0}$ and~$(X_m^-)_{m \in
    \N_0}$, see~\S\ref{ss:ground states} for a couple of them.}
\end{equation}
Let~$U^+$ be a clopen neighborhood of~$X_0^+$ that is disjoint
from~$X_0^-$, and put~$U^- \= \Sigma \setminus U^+$.
Finally, for each integer~$m \ge 1$, put
$$ Y_m^+ \= X_{m - 1}^+ \cup X_m^-
\text{ and }
Y_m^- \= X_{m - 1}^- \cup X_m^+, $$
and for~$\varsigma$ in~$\{ +, - \}$ put~$\chi_m^{\varsigma} \=
\1_{\Sigma} + \chi_{Y_m^{\varsigma}}$.

\partn{1}
Define inductively sequences of positive numbers~$(\beta_m)_{m \in \N}$
and~$(\varepsilon_m)_{m \in \N}$, as follows.
Put~$\varepsilon_1 \= 1$, and let~$m \ge 1$ be an
integer such that~$\varepsilon_1$, \ldots, $\varepsilon_m$ are already defined.
In the case~$m \ge 2$, assume that~$\beta_1$, \ldots, $\beta_{m - 1}$
are already defined. 
Given~$\uvarsigma = (\varsigma(k))_{k = 1}^{m}$ in~$\{ +, - \}^{\{1,
  \ldots, m\}}$, put
\begin{equation}
\label{e:finite level potential'}
 \varphi(\uvarsigma)
\=
\tfrac{1}{8} \sum_{k = 1}^m \varepsilon_k \chi_k^{\varsigma(k)},
\end{equation}
and note that~$\varphi(\uvarsigma)$ attains it maximum precisely on~$Y_m^{\varsigma(m)}$.
Let~$\beta(\uvarsigma)$ be the number~$\beta_0$ given by
Lemma~\ref{l:ground states maximize entropy} with~$\varphi =
\varphi(\uvarsigma)$, $X = X_{m - 1}^{\varsigma(m)}$, $X' =
Y_m^{\varsigma(m)} \setminus X_{m - 1}^{\varsigma(m)}$, $\delta =
\tfrac{1}{6}$, and~$U = U^{\varsigma(m)}$.
Define
$$ \beta_m
\=
\max \left\{ \beta(\uvarsigma) : \uvarsigma \in \{ +, - \}^{\{1, \ldots, m\}} \right\}. $$
In the case~$m \ge 2$, replace~$\beta_m$ by~$\beta_{m - 1} + m$ if
necessary, so that~$\beta_{m + 1} > \max \{ \beta_m, m \}$.

To define~$\varepsilon_{m + 1}$, for each~$\uvarsigma = (\varsigma_k)_{k = 1}^m$
in~$\{ +, - \}^{\{1, \ldots, m \}}$ let~$\varepsilon(\uvarsigma)$ be given by Lemma~\ref{l:uniform dependence
  of equilibria} with~$\varphi_0 = \varphi(\uvarsigma)$, $\beta_0 =
\beta_m$, $\delta = \tfrac{1}{6}$, and~$\psi = \1_{U^+}$.
Put
$$ \varepsilon_{m + 1}
\=
\min \left\{ \varepsilon(\uvarsigma) : \uvarsigma \in \{ +, - \}^{\{1,
    \ldots, m\}} \right\}. $$
Replacing~$\varepsilon_{m + 1}$ by~$\varepsilon_m/2$ if necessary,
assume~$\varepsilon_{m + 1} \le \varepsilon_m/2$.

This completes the definition of~$(\varepsilon_m)_{m \in \N}$
and~$(\beta_m)_{m \in \N}$.
Note that for every integer~$m \ge 0$ we have~$\varepsilon_{m + 1} \le
\varepsilon_m/2$ and~$\beta_{m + 1} > \max \{ \beta_m, m \}$, so the sequence~$(\beta_m)_{m \in \N}$ is strictly increasing and~$\beta_m
\to + \infty$ as~$m \to + \infty$.

\partn{2}
Given~$(\varsigma(m))_{m \in \N}$ in~$\signs$, put
$$ \varphi \left( (\varsigma(m))_{m \in \N} \right)
\=
\tfrac{1}{8} \sum_{m = 1}^{+ \infty} \varepsilon_m \chi_m^{\varsigma(m)}. $$
To prove that~$\left( \varphi(\uvarsigma) \right)_{\uvarsigma \in \signs}$
is continuous in~$\Lip$, let~$m_0 \ge 1$ be an integer, and
let~$(\varsigma(m))_{m \in \N}$ and~$(\varsigma'(m))_{m \in \N}$ in~$\signs$ be such that for every~$k$ in~$\{1, \ldots, m_0 \}$ we
have~$\varsigma(k) = \varsigma'(k)$.
Then
\begin{equation*}
  \begin{split}
\left\| \varphi \left( (\varsigma(m))_{m \in \N} \right)
-
\varphi \left( (\varsigma'(m))_{m \in \N} \right) \right\|_{\Lip}
& \le
\tfrac{1}{8} \sum_{m = m_0 + 1}^{+ \infty} \varepsilon_m \left\|
  \chi_{Y_m^+} - \chi_{Y_m^-} \right\|_{\Lip}
\\ & \le
\tfrac{1}{2} \sum_{m = m_0 + 1}^{+ \infty} \varepsilon_m
\\ & \le
\varepsilon_{m_0 + 1}.
  \end{split}
\end{equation*}

To complete the proof of the theorem, let~$\uvarsigma = (\varsigma(k))_{k
  \in \N}$ in~$\signs$ and fix~$m$ in~$\N$.
Put~$\wtuvarsigma \= (\varsigma(k))_{k = 1}^m$ and
let~$\varphi(\wtuvarsigma)$ be defined by~\eqref{e:finite level
  potential'} with~$\uvarsigma$ replaced by~$\wtuvarsigma$.
By our choice of~$\beta_m$, for every~$\beta \ge \beta_m$ and every equilibirum state $\nu$ for the potential $\beta \cdot \varphi(\wtuvarsigma)$ we
have
\begin{equation}
  \label{e:approximated chaoticity'}
\nu
\left( U^{\varsigma(m)} \right)
\ge
\tfrac{5}{6}.
\end{equation}
On the other hand,
\begin{equation*}
  \begin{split}
\| \varphi(\uvarsigma) - \varphi(\wtuvarsigma) \|_{\Lip}
& \le
\tfrac{1}{8} \sum_{k = m + 1}^{+ \infty} \varepsilon_k \left\| \chi_k^{\varsigma(k)}
\right\|_{\Lip}
\\ & \le
\tfrac{1}{8} \sum_{k = m + 1}^{+ \infty} \varepsilon_k \left( 1 + \left\|
    \chi_{Y_k^{\varsigma(k)}} \right\|_{\Lip} \right)
\\ & \le
\tfrac{3}{8} \sum_{k = m + 1}^{+ \infty} \varepsilon_l
\\ & <
\varepsilon_{m + 1},
  \end{split}
\end{equation*}
so by our choice of~$\varepsilon_{m + 1}$ it follows that for
every translation invariant Gibbs measure $\rho$ for the potential $\beta_m\cdot \varphi(\uvarsigma)$ there is an  equilibirum state $\nu$ for the potential $\beta_m \cdot \varphi(\wtuvarsigma)$ such that
$$ \left| \rho(U^{\varsigma(m)})
- \nu(U^{\varsigma(m)}) \right|
<
\tfrac{1}{6}. $$
Together with~\eqref{e:approximated chaoticity'} with~$\beta = \beta_m$
this gives the desired
conclusion.
\end{proof}

\subsection{Ground states}
\label{ss:ground states}
In this section we describe a couple of ways to choose the
sequences~$(X_m^+)_{m \in \N_0}$ and~$(X_m^-)_{m \in \N_0}$ in the proof
of Main Lemma~\ref{t:symbolic sensitive dependence}'.
We discuss the case of dimension~$1$ in~\S\ref{sss:ground states 1},
where we also give the proof of Corollary~\ref{c:arbitrary accumulation}, and the case of
dimension larger than~$1$ in~\S\ref{sss:ground states ge 1}.

\subsubsection{Dimension~$1$}
\label{sss:ground states 1}

We use the following lemma.

\begin{lemm}
\label{l:upper approximation by sft's}
Let~$X$ be a compact subset of~$\Sigma$ that is invariant and transitive for~$\sigma$, and that is not a subshift of finite type.
Then there is a decreasing sequence~$(X_m)_{m \in \N}$ of
compact subsets of~$\Sigma$ that are invariant by~$\sigma$, such that
$$ \bigcap_{m \in \N} X_m = X
\text{ and }
\lim_{m \to + \infty} \htop \left( \sigma|_{X_m} \right) =
\htop \left( \sigma|_{X} \right), $$
and such that for every~$m$ in~$\N$ we have~$\htop \left( \sigma|_{X_{m
    + 1}} \right) < \htop \left( \sigma|_{X_m} \right)$.
\end{lemm}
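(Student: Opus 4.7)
The plan is to take the standard sequence of SFT approximations of $X$ from above, and then extract a subsequence along which the topological entropy decreases strictly. For each integer $m \ge 1$, denote by $\cL_m$ the set of length-$m$ words that occur in some element of $X$, and define
$$ X_m^{(0)} \= \{ y \in \Sigma : \text{every length-} m \text{ subword of } y \text{ lies in } \cL_m \}. $$
Then $X_m^{(0)}$ is a subshift of finite type, $X_m^{(0)} \supset X_{m + 1}^{(0)} \supset X$, and $\bigcap_{m} X_m^{(0)} = X$. A standard counting argument on $m$-block languages yields $\htop(\sigma|_{X_m^{(0)}}) \to \htop(\sigma|_X)$ as $m \to + \infty$.

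The crucial step is to show that $\htop(\sigma|_{X_m^{(0)}}) > \htop(\sigma|_X)$ for every $m$. The transitivity of $X$ implies that every point of $X$ is non-wandering for $\sigma|_{X_m^{(0)}}$, and that a dense orbit in $X$ cannot cross between distinct strongly connected components of the directed graph defining the SFT $X_m^{(0)}$. Hence $X$ is contained in a single irreducible SFT component $Y_m$ of $X_m^{(0)}$. Since $X$ is not of finite type while $Y_m$ is, we have $X \subsetneq Y_m$. The Parry measure of the irreducible SFT $Y_m$ is the unique measure of maximal entropy for $\sigma|_{Y_m}$ and has full support, so any proper closed invariant subset of $Y_m$ has strictly smaller topological entropy. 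Consequently,
$$ \htop(\sigma|_X) < \htop(\sigma|_{Y_m}) \le \htop(\sigma|_{X_m^{(0)}}). $$

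Granting this strict inequality, the decreasing sequence $\left( \htop(\sigma|_{X_m^{(0)}}) \right)_{m \ge 1}$ stays strictly above its limit $\htop(\sigma|_X)$, so it cannot be eventually constant. I would then extract indices $m_1 < m_2 < \cdots$ along which the entropies are strictly decreasing and set $X_k \= X_{m_k}^{(0)}$. All four conclusions of the lemma follow at once.

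The main obstacle is the SFT decomposition step: justifying carefully that the transitive non-SFT subshift $X$ sits inside a single irreducible SFT component of $X_m^{(0)}$, so that the Parry-measure dichotomy applies. I would handle this via the vertex-graph presentation of $X_m^{(0)}$ and the classical decomposition into strongly connected components, together with the observation that a dense orbit in $X$ cannot leave a strongly connected component once it has entered one, nor can it lie in a non-recurrent part of the graph.
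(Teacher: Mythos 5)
Your proof is correct, and it runs along essentially the same lines as the paper's: the same SFT hulls $X_m^{(0)}$ built from the length-$m$ language of $X$, the same approximation facts for the intersection and for the convergence of entropies, and the same key ingredient, namely that a proper closed invariant subset of an irreducible SFT has strictly smaller entropy (your Parry-measure argument is exactly the content of the Lind--Marcus result the paper cites). The one genuine difference is where that ingredient is applied. The paper shows that transitivity of $X$ makes each hull $\Sigma_\ell$ itself an irreducible SFT, and gets strictness by comparing \emph{consecutive} hulls: since $X$ is not of finite type, the inclusions $\Sigma_{\ell'} \subsetneq \Sigma_{\ell}$ occur infinitely often, and each strict inclusion forces an entropy drop. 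You instead compare $X$ with an irreducible component $Y_m$ of each hull to get $\htop(\sigma|_{X_m^{(0)}}) > \htop(\sigma|_X)$ for every $m$, and then extract a strictly decreasing subsequence because a non-increasing sequence lying strictly above its limit cannot be eventually constant; this is equally valid, and your version has the mild advantage of not needing the whole hull to be irreducible, only the component containing $X$. For the step you flag as the main obstacle, the cleanest justification is linguistic rather than orbit-theoretic: reading transitivity as topological transitivity (equivalently, irreducibility of the language of $X$), for any $u, v \in \cL_m$ there is $w$ with $uwv$ in the language of $X$, so all of $\cL_m$ lies in a single strongly connected component and every itinerary of a point of $X$ stays in it. Your non-wandering/dense-orbit argument is fine under that same reading (that is what guarantees all points of $X$, including the transitive one, are non-wandering); note that the paper's assertion that each $\Sigma_\ell$ is transitive rests on exactly the same interpretation of the hypothesis, so on this point you are no worse off than the original.
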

\begin{proof}
For each integer~$\ell \ge 1$, let~$\Sigma_\ell$ be the subshift of
finite type of all words in~$\Sigma$ with the property that every subword of
length~$\ell$ is a subword of a word in~$X$.
Clearly, $\bigcap_{\ell \in \N} \Sigma_\ell = X$, so~$\lim_{\ell \to + \infty}
\htop(\sigma|_{\Sigma_\ell}) = \htop(\sigma|_X)$, see for example~\cite[Proposition~$4.4.6$]{LinMar95}.
Our hypothesis that~$\sigma$ is transitive on~$X$ implies that for
every~$\ell$ in~$\N$ the map~$\sigma$ is transitive on~$\Sigma_\ell$.
On the other hand, our hypothesis that~$X$ is not a subshift of finite
type implies that for every~$\ell$ there is~$\ell' \ge \ell + 1$ such
that~$\Sigma_{\ell'}$ is strictly contained in~$\Sigma_{\ell}$.
Since~$\Sigma_{\ell}$ and~$\Sigma_{\ell'}$ are both subshifts of finite
type and~$\Sigma_{\ell}$ is transitive, it follows
that~$\htop(\Sigma_{\ell'}) < \htop(\Sigma_{\ell})$, see for example~\cite[Corollary~$4.4.9$]{LinMar95}.
So we can extract a subsequence~$(X_m)_{m \in \N_0}$ of~$\left( \Sigma_{\ell} \right)_{\ell \in \N}$ satisfying the desired properties.
\end{proof}

We now explain a way to choose the sequences~$(X_m^+)_{m \in \N_0}$
and~$(X_m^-)_{m \in \N_0}$ in the proof of Main Lemma~\ref{t:symbolic sensitive dependence}' when the dimension is~$1$.
Let~$X^+$ and~$X^-$ be disjoint and infinite compact subsets
of~$\Sigma$ that are invariant and minimal for~$\sigma$, and
such that~$\htop \left( \sigma|_{X^+} \right) = \htop \left(
  \sigma|_{X^-} \right)$.
Since~$X^+$ (resp.~$X^-$) is infinite and minimal for~$\sigma$, it
follows that it is not a subshift of finite type.
So~$X^+$ and~$X^-$ satisfy the hypotheses of Lemma~\ref{l:upper approximation by sft's}.
Let~$(X_m^+)_{m \in \N_0}$ (resp.~$(X_m^-)_{m \in \N_0}$) be the sequence~$(X_m)_{m \in \N_0}$ given by
Lemma~\ref{l:upper approximation by sft's} with~$X = X^+$ (resp.~$X =
X^-$).
Replacing~$(X_m^+)_{m \in \N_0}$ and~$(X_m^-)_{m \in \N_0}$ by
subsequences if necessary, assume~$X_0^+$ and~$X_0^-$ are disjoint.
These sequences satisfy all the requirements, with the possible
exception of~\eqref{e:entropy condition'} which is easy to satisfy by taking subsequences.

\begin{proof}[Proof of Corollary~\ref{c:arbitrary accumulation}]
In the case~$\rho^+$ (resp.~$\rho^-$) is not purely atomic, the
set~$X^+$ (resp.~$X^-$) is infinite and therefore it is not a subshift of finite type.
So in this case~$X^+$ (resp.~$X^-$) satisfies the hypotheses of Lemma~\ref{l:upper approximation by sft's}.
In the case~$\rho^+$ (resp.~$\rho^-$) is purely atomic, the set~$X^+$
(resp.~$X^-$) is a periodic orbit of~$\sigma$.
We enlarge~$X^+$ (resp.~$X^-$) to a set satisfying the hypotheses of
Lemma~\ref{l:upper approximation by sft's}, as follows.
Suppose first~$G = \Z$, and let~$x_0$ be a point in~$\Sigma$ that is not
in~$X^+$ (resp.~$X^-$) and that differs with some point in~$X^+$
(resp.~$X^-$) only at finitely many positions.
Then the orbit of~$x_0$ is forward and backwards asymptotic to~$X^+$ (resp.~$X^-$),
and the invariant set
$$ X^+ \cup \{ \sigma^g(x_0) : g \in \Z \}, 
\text{ (resp. $X^+ \cup \{ \sigma^g(x_0) : g \in \Z \}$)} $$
is compact and transitive.
Furthermore, this set is not a subshift finite type and~$\rho^+$ (resp.~$\rho^-$) is the
only invariant measure supported on this set.
Suppose now~$G = \N_0$, let~$x_0$ be a point in~$\sigma^{-1}(X^+)
\setminus X^+$ (resp. $\sigma^{-1}(X^-) \setminus X^-$), and
let~$(x_j)_{j \in \N}$ be a sequence in~$\Sigma$ that is asymptotic
to~$X^+$ (resp.~$X^-$) and such that for every~$j$ we
have~$\sigma(x_j) = x_{j - 1}$.
Then the invariant set
$$ X^+ \cup \{ x_j : j \in \N \}
\text{ (resp. $X^- \cup \{ x_j : j \in \N \}$)}$$
is compact and transitive.
Furthermore, this set is not a subshift finite type and~$\rho^+$ (resp.~$\rho^-$) is the
only invariant measure supported on this set.
In all the cases the (enlarged) sets~$X^+$ and~$X^-$ satisfy the hypotheses of
Lemma~\ref{l:upper approximation by sft's}, and~$\rho^+$ and~$\rho^-$
are the only invariant probability measures supported on~$X^+$
and~$X^-$, respectively.

We now explain how to modify the proof of Main Lemma~\ref{t:symbolic sensitive dependence}' to obtain the desired statement.
Choose sequences~$(X_m^+)_{m \in \N_0}$ and~$(X_m^-)_{m \in \N_0}$ as
explained above for our choices of~$X^+$ and~$X^-$.
For each integer~$m$ in~$\N$ choose disjoint clopen neighborhoods~$U_m^+$
and~$U_m^-$ of~$X_m^+$ and~$X_{m - 1}^-$, respectively, so that
$$ \bigcap_{m \in \N_0} U_m^+ = X^+
\text{ and }
\bigcap_{m \in \N_0} U_m^- = X^-. $$
With this notation, the only changes in the proof of Main Lemma~\ref{t:symbolic sensitive dependence}' are the following:
\begin{itemize}
\item
Apply Lemma~\ref{l:ground states maximize entropy} with~$U =
U_m^{\varsigma(m)}$ and~$\delta = 2^{-m}$ instead of~$U =
U^{\varsigma(m)}$ and~$\delta = \tfrac{1}{6}$;
\item
Apply Lemma~\ref{l:uniform dependence of
  equilibria} with~$\delta = 2^{-m}$ instead of~$\delta = \tfrac{1}{6}$.
\end{itemize}
Then we obtain the following: For every~$\left( \varsigma(m) \right)_{m \in \N}$
in~$\signs$, every~$\varsigma$ in~$\{ +, - \}$, and every sequence of
integers~$(m_\ell)_{\ell \in \N}$ in~$\N$ such that~$m_{\ell} \to + \infty$
as~$\ell \to + \infty$ and such that for every~$\ell$ such that~$\varsigma(m_{\ell}) = \varsigma$,
we have
$$ \rho_{\beta_{m_{\ell}} \cdot \varphi(\uvarsigma)} \to
\rho^{\varsigma}
\text{ as }
\ell \to + \infty. $$
So for every~$\left( \varsigma(m) \right)_{m \in \N}$ in~$\signs$
having infinitely many~$+$'s and infinitely many~$-$'s, the Lipschitz
continuous function~$\varphi = \varphi(\uvarsigma)$ satisfies the desired property.
  \end{proof}
\begin{rema}
\label{r:arbitrary accumulation}
By construction, the function~$\varphi(\uvarsigma)$ in the proof of
Corollary~\ref{c:arbitrary accumulation} attains its maximum precisely
on the set~$X^+ \cup X^-$.
It follows that~$\rho^+$ and~$\rho^-$ are the only invariant
and ergodic probability measures~$\rho$ on~$\Sigma$ maximizing the integral~$\int \varphi \dd \rho$.
\end{rema}

\subsubsection{Dimension larger than~$1$}
\label{sss:ground states ge 1}
 We show~$2$ ways to choose the sequences~$(X_m^+)_{m \in \N_0}$ and~$(X_m^-)_{m \in \N_0}$ in the proof
of Main Lemma~\ref{t:symbolic sensitive dependence}'.
The first one is a general way to construct a multidimensional subshift from a one-dimensional subshift, preserving the topological entropy. The second way is based in some results of \cite{Pav11} on multidimensional subshifts of finite type.

We start  with a construction of a multidimensional subshift from a
one-dimensional subshift.
Fix~$d \ge 2$ and recall that $\Sigma=F^{G}$, with $G$ equal to~$\N_0^d$ or~$\Z^d$.
Put~$G_0 \= \N_0$ or~$\Z$, accordingly. 
Given a one-dimensional subshift~$X$ of~$F^{G_0}$, put 
$$
\widehat{X}
\=
\left\{ (x_n)_{n\in G} \in \Sigma : \text{ for all } n_2, \ldots, n_d \in
G_0, \left( x_{(i,n_2,n_3, \ldots, n_d)} \right)_{i \in G_0} \in X
\right\}.
$$
Notice that $\widehat{X}$ can be naturally identified with $X^{G_0^{d
    - 1}}$.
Since $X$ is invariant by translation and closed in the product
topology, the set $\widehat{X}$ is invariant by the action of $G$ and
closed in the product topology.
Thus, $\widehat{X}$ is a multidimensional subshift, and it is easy to
see  that the topological entropies of~$X$ and~$\widehat{X}$ are
equal.
Applying this construction to any pair of sequences  used to prove  Main Lemma~\ref{t:symbolic sensitive dependence}' in dimension~$1$, for example those constructed in the previuos subsection, we obtain the desired sequences in dimension~$d$.

For the second construction, we use the following lemma which follows from some results in
\cite{Pav11}.\footnote{Although the results in~\cite{Pav11} are stated
  for~$G = \Z^d$, they extend without change to the case~$G = \N_0^d$
  by remarking that a subshift of finite type of~$F^{\N_0^d}$ and its
  natural extension have the same topological entropy.
Recall that the natural extension of a subshift of finite type~$X$
of~$F^{\N_0^d}$ is the set of configurations in~$F^{\Z^d}$ having all
of its patterns appearing in some configuration in~$X$.}
For the definition of strongly irreducible subshift we refer the reader to \cite{Pav11}.
 
\begin{lemm} 
\label{l:decreasing entropy subshift}
Let $d\ge 2$.
Let $X$ be a  strongly irreducible subshift of finite type in $\Sigma$
with at least two elements.
Then there is a decreasing sequence $(X_m)_{m\in \N_0}$ of strongly
irreducible subshifts of finite type in~$\Sigma$ starting with~$X$ and
such that for every~$m$ we have~$\htop \left( \sigma|_{X_{m
    + 1}} \right) < \htop \left( \sigma|_{X_m} \right)$.
\end{lemm}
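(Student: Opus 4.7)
The plan is to construct the sequence $(X_m)_{m\in\N_0}$ by induction with $X_0=X$, maintaining the inductive hypothesis that each $X_m$ is a strongly irreducible SFT containing at least two elements. The task then reduces to the following: for any strongly irreducible SFT $Y\subseteq\Sigma$ with $|Y|\ge 2$, produce a strongly irreducible SFT $Y'\subsetneq Y$ with $|Y'|\ge 2$ and $\htop(\sigma|_{Y'})<\htop(\sigma|_Y)$.

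The first step is to observe that every strongly irreducible SFT $Y$ with at least two elements has strictly positive topological entropy. Indeed, if $g$ denotes a gap of $Y$ and $a\neq b$ are letters of $F$ each appearing as the $0$-coordinate of some configuration of $Y$ (which exist since $|Y|\ge 2$ and by translation invariance), then strong irreducibility applied iteratively along the $(g+1)$-spaced grid inside a cube of side $k$ allows us to choose the letter at each grid site independently from $\{a,b\}$, producing at least $2^{(\lfloor k/(g+1)\rfloor)^d}$ admissible patterns on that cube. Hence $\htop(\sigma|_Y)\ge(\log 2)/(g+1)^d>0$.

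For the inductive step I would apply the gluing and entropy-drop machinery of Pavlov in~\cite{Pav11}. Pick a cube $C_R$ of side $R$ much larger than the gap of $Y$, and a pattern $p$ of shape $C_R$ that is admissible in $Y$ and whose complement still carries configurations of $Y$; define $Y'$ to be the sub-SFT of $Y$ obtained by additionally forbidding $p$. Three properties need to be checked: (i) $Y'$ is strongly irreducible, with a gap of order $R$ plus the gap of $Y$, which follows from Pavlov's gluing lemma by pasting patterns of $Y$ that individually avoid~$p$; (ii) $|Y'|\ge 2$, arranged by taking $R$ large enough so that strong irreducibility of $Y$ yields two configurations differing at a prescribed site and both avoiding $p$ globally; and (iii) $\htop(\sigma|_{Y'})<\htop(\sigma|_Y)$, a consequence of Pavlov's estimates showing that in a strongly irreducible SFT with positive entropy any fixed legal pattern occurs with positive asymptotic density in typical configurations, so its global prohibition costs an exponential factor in the number of admissible $n$-blocks.

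The main obstacle is the simultaneous enforcement of (i), (ii), and (iii): forbidding a pattern in an arbitrary SFT can destroy strong irreducibility or shrink the subshift to a trivial set, while forbidding ``too rare'' a pattern gives no entropy drop. The balance is achieved by choosing $R$ large relative to the gap of $Y$ and to the combinatorial complexity of its language, which is exactly what Pavlov's constructions in~\cite{Pav11} are designed to do. Iterating the inductive step yields the desired decreasing sequence of strongly irreducible SFTs with strictly decreasing topological entropy.
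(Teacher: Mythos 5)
Your proposal is correct and takes essentially the same route as the paper: both iterate the removal of a sufficiently large admissible pattern and invoke Pavlov's machinery from~\cite{Pav11} (Theorem~1.2 and Lemma~9.2 there) to keep strong irreducibility and nontriviality while forcing a strict entropy drop. The only cosmetic difference is that you prove the positive-entropy input directly by the grid-gluing count, where the paper simply cites \cite[Lemma~4.11]{Pav11}.
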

\begin{proof}
Put $X_0=X$.
By \cite[Lemma~$4.11$]{Pav11}, $X_0$ has positive entropy. 
By \cite[Theorem~$1.2$ and Lemma~$9.2$]{Pav11}, if one removes a
sufficiently large pattern of~$X_0$ then the resulting  subshift of
finite type~$X_1$ is strongly irreducible, has at least~$2$ elements,
and its entropy is strictly smaller than the entropy of~$X_0$.
Again by~\cite[Lemma~$4.11$]{Pav11}, $X_1$ has positive entropy. Continuing in this way one can construct a decreasing sequence of strongly  irreducible subshifts of finite type with the desired property. 
\end{proof}

Now we show a way to choose the sequences~$(X_m^+)_{m \in \N_0}$ and~$(X_m^-)_{m \in \N_0}$ in the proof of Main Lemma~\ref{t:symbolic sensitive dependence}' when the dimension~$d$ is larger than~$1$.
Recall that the alphabet~$F$ has at least~$2$ symbols, say~$0$ and~$1$.
Let~$X_0$ be the subshift of finite type of~$\Sigma$ that is contained
in~$\{0, 1\}^G$ and whose set of
forbidden patterns consist of two-site patterns with two consecutive~$1$'s (in each direction, including the diagonals). Clearly, $X_0$ is a
strongly irreducible subshift and by Lemma~\ref{l:decreasing entropy
  subshift} there is a decreasing  sequence of subshifts $(X_m)_{m\in
  \N}$ with strictly decreasing entropy.
For every integer~$m \ge 1$ put $X_m^+ \= X_m$ and let~$X_m^-$ be
the subshift obtained by exchanging~$0$'s and~$1$'s in~$X_m$.
These last two sequences verify the desired properties.

\subsection{Proof of Main Lemma~\ref{t:symbolic sensitive dependence}}
\label{ss:proof of symbolic chaotic dependence}
The following is a variant of Lemma~\ref{l:ground states maximize
  entropy}, with a similar proof.
We include it for completeness.
\begin{lemm}
\label{l:ground states maximize entropy reloaded}
Let~$X$ and~$X'$ (resp.~$\tX$ and~$\tX'$) be disjoint compact subsets
of~$\Sigma$, each of which is  invariant by~$\sigma$, and such
that
$$ \tX \subset X,
\tX' \subset X',
\htop \left( \sigma|_X \right) > \htop \left( \sigma|_{X'} \right),
\text{ and }
\htop \left( \sigma|_{\tX} \right) > \htop \left( \sigma|_{\tX'}
\right). $$
Moreover, let~$\varphi \colon \Sigma \to \R$ and~$\tvarphi \colon \Sigma \to \R$
be Lipschitz continuous functions attaining its maximum precisely
on~$X \cup X'$ and~$\tX \cup \tX'$, respectively.
Then for every~$\varepsilon_0 > 0$, every~$\delta$ in~$(0, 1)$, and
every neighborhood~$U$ of~$X$ there is~$\beta_0 > 0$ such that for
every~$\beta \ge \beta_0$, every~$\varepsilon$ in~$[0,
\varepsilon_0]$ and   every translation invariant Gibbs measure $\rho$ for the potential  $\beta \cdot (\varphi + \varepsilon \tvarphi)$ we have
$$ \rho(U)
\ge
1 - \delta. $$
\end{lemm}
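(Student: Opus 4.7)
The plan is to adapt the argument of Lemma~\ref{l:ground states maximize entropy}, proceeding by contradiction while tracking both $\varepsilon$ and the rescaled parameter $\gamma \= \beta \varepsilon$. Suppose the conclusion fails; then we may choose sequences $\beta_\ell \to +\infty$, $\varepsilon_\ell \in [0, \varepsilon_0]$, and translation invariant Gibbs measures $\rho_\ell$ for $\beta_\ell \cdot (\varphi + \varepsilon_\ell \tvarphi)$ with $\rho_\ell(U) < 1 - \delta$ for every $\ell$. Set $m \= \sup_\Sigma \varphi$ and $\widetilde{m} \= \sup_\Sigma \tvarphi$; passing to a subsequence assume $\varepsilon_\ell \to \varepsilon_* \in [0, \varepsilon_0]$, $\rho_\ell \to \rho$ in the weak-$*$ topology, and $\gamma_\ell \= \beta_\ell \varepsilon_\ell \to \gamma_* \in [0, +\infty]$.

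The first step is to test the equilibrium state inequality for $\rho_\ell$ against an invariant measure $\nu$ supported on $\tX \cup \tX'$. The key observation is that, because $\tX \subset X$ and $\tX' \subset X'$, both $\varphi$ and $\tvarphi$ attain their maxima on $\tX \cup \tX'$, so that $\int(\varphi + \varepsilon_\ell \tvarphi) \dd \nu = m + \varepsilon_\ell \widetilde{m}$ equals the maximum of the potential. Dividing the inequality by $\beta_\ell$ and letting $\ell \to +\infty$ gives $\int(\varphi + \varepsilon_* \tvarphi) \dd \rho \geq m + \varepsilon_* \widetilde{m}$, forcing $\rho$ to be concentrated on the maximum set of $\varphi + \varepsilon_* \tvarphi$: this is $\tX \cup \tX'$ when $\varepsilon_* > 0$, and $X \cup X'$ when $\varepsilon_* = 0$. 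The same inequality (without dividing) yields $h_{\rho_\ell} \geq h_\nu$, and the upper semicontinuity of entropy on $\sM_\sigma$ then gives $h_\rho \geq \htop(\sigma|_{\tX})$.

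When $\varepsilon_* > 0$, or when $\varepsilon_* = 0$ with $\gamma_\ell \to +\infty$ (in which case the bound on $\gamma_\ell(\widetilde{m} - \int \tvarphi \dd \rho_\ell)$ forces $\int \tvarphi \dd \rho = \widetilde{m}$, so that $\rho$ also lies on $\tX \cup \tX'$), the ergodic decomposition of $\rho$ combined with the gap $\htop(\sigma|_{\tX}) > \htop(\sigma|_{\tX'})$ forces $\rho(\tX) = 1$, hence $\rho(X) = 1$ since $\tX \subset X$. This contradicts $\rho(U) \leq \liminf_\ell \rho_\ell(U) \leq 1 - \delta$ via the Portmanteau theorem, as $X \subset U$.

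The remaining case $\varepsilon_* = 0$ with $\gamma_* < +\infty$ is the main technical obstacle: here $\rho$ lives on $X \cup X'$ but not necessarily on $\tX \cup \tX'$, and a sharper entropy bound is required. I would obtain it by taking the supremum over all $\nu \in \sM_\sigma$ supported on $X$ in the equilibrium state inequality, which gives $h_{\rho_\ell} + \gamma_\ell \int \tvarphi \dd \rho_\ell \geq P_X(\gamma_\ell \tvarphi)$, where $P_X$ denotes the pressure of the subsystem~$X$. Passing to the limit (using continuity of pressure in the potential and upper semicontinuity of entropy) yields $h_\rho + \gamma_* \int \tvarphi \dd \rho \geq P_X(\gamma_* \tvarphi)$, and analogously with $X'$ in place of~$X$. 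Writing the ergodic decomposition $\rho = a \rho^X + (1 - a) \rho^{X'}$ with $a \= \rho(X)$, the affineness of $h_\cdot$ on measures with disjoint supports combined with these two inequalities gives $(1 - a)\bigl(P_X(\gamma_* \tvarphi) - P_{X'}(\gamma_* \tvarphi)\bigr) \leq 0$, so that $a = 1$ as soon as $P_X(\gamma_* \tvarphi) > P_{X'}(\gamma_* \tvarphi)$. This strict inequality holds at $\gamma_* = 0$ by $\htop(\sigma|_X) > \htop(\sigma|_{X'})$ and as $\gamma_* \to +\infty$ by $\htop(\sigma|_{\tX}) > \htop(\sigma|_{\tX'})$ (using $\tvarphi \equiv \widetilde{m}$ on $\tX \cup \tX'$); verifying it uniformly for intermediate $\gamma_*$, where $P_X - P_{X'}$ is a difference of convex functions, is the delicate step that must be analyzed using the specific structure of~$\tvarphi$.
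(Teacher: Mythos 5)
Your organization of the proof around the rescaled parameter $\gamma_\ell = \beta_\ell\varepsilon_\ell$ is finer than the paper's own argument, and the comparison is instructive. In the paper one passes to a subsequence with $\varepsilon_\ell \to \varepsilon_*$ and distinguishes only $\varepsilon_* > 0$ from $\varepsilon_* = 0$; in the branch $\varepsilon_* = 0$ the equilibrium inequality is tested against measures $\nu'$ carried by $X$ under the claim that $\int (\varphi + \varepsilon_\ell \tvarphi) \dd \nu'$ equals $\sup(\varphi + \varepsilon_* \tvarphi)$. That identity holds only if $\varepsilon_\ell = 0$ or $\nu'$ is carried by $\tX$; the error it suppresses is of size $\gamma_\ell \bigl( \max \tvarphi - \int \tvarphi \dd \nu' \bigr)$, so the paper's chain of inequalities is justified only when $\gamma_\ell \to 0$. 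Your handling of $\gamma_* = +\infty$ (the bound $\gamma_\ell \bigl( \max\tvarphi - \int \tvarphi \dd \rho_\ell \bigr) \le C$ forcing $\int \tvarphi \dd \rho = \max\tvarphi$, and then the entropy comparison on $\tX \cup \tX'$) is correct and is in effect the repair of that branch, and your reduction of the remaining case $\gamma_* \in (0, +\infty)$ to the strict inequality $P_X(\gamma_* \tvarphi) > P_{X'}(\gamma_* \tvarphi)$, with $P_X(\gamma \tvarphi) = \sup\{ h_\nu + \gamma \int \tvarphi \dd \nu \}$ over invariant measures carried by $X$, is also correct.

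The step you leave open is, however, a genuine gap, and it cannot be closed from the stated hypotheses, because the intermediate inequality can fail while all four entropy hypotheses hold. Take $X$ and $X'$ to be the full shifts on two disjoint subalphabets of sizes $N$ and $N - 1$, let $\tX \subset X$ be the full shift on two symbols of $X$ and $\tX' \subset X'$ a fixed point, take $\varphi = - \dist(\cdot, X \cup X')$, and choose $\tvarphi$ with maximum $0$ attained exactly on $\tX \cup \tX'$, equal to $-\dist(\cdot, \tX)$ on $X$ and to $-\eta \, \dist(\cdot, \tX')$ on $X'$ with $\eta = 1/\log N$; for instance $\tvarphi = - \min \bigl\{ \dist(\cdot, \tX), \ \eta \, \dist(\cdot, \tX') + (1 - \eta) \dist(\cdot, X') \bigr\}$. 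A first-coordinate entropy estimate gives $P_X(\gamma \tvarphi) \le 2 \log 2$ once $\gamma \ge \log N$, while $P_{X'}(\gamma \tvarphi) \ge \log(N - 1) - \gamma \eta \ge \log(N-1) - 1$ at $\gamma = \log N$; so along $\varepsilon = \gamma/\beta$ the Gibbs measures accumulate only on measures giving almost all their mass to $X'$, and the conclusion of the lemma fails for a clopen neighborhood $U$ of $X$ disjoint from $X'$. In other words, the "delicate step" you flag is not an artifact of your route: it is exactly the regime the paper's proof does not cover, and closing it requires input beyond the lemma's hypotheses. For instance, the additional assumption $\htop(\sigma|_{\tX}) > \htop(\sigma|_{X'})$ yields $P_X(\gamma \tvarphi) \ge \htop(\sigma|_{\tX}) + \gamma \max \tvarphi > \htop(\sigma|_{X'}) + \gamma \max \tvarphi \ge P_{X'}(\gamma \tvarphi)$ for every $\gamma \ge 0$, which is precisely the uniform inequality your argument needs; otherwise one must exploit, as you anticipate, the specific structure of $\tvarphi$ and of the sets used in the application.
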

\begin{proof}
It is enough to show that for every sequence~$(\varepsilon_\ell)_{\ell
  \in \N}$ in~$[0, \varepsilon_0]$ and every sequence of positive numbers~$(\beta_\ell)_{\ell \in \N}$ such
that~$\beta_\ell \to + \infty$ and such that every sequence of translation invariant Gibbs measures~$(\rho_\ell)_{\ell \in \N} $ for the potentials  $ 
\beta_\ell \cdot (\varphi +
  \varepsilon_\ell \tvarphi)$ that converges to a measure~$\rho$ as~$\ell
\to + \infty$, the measure~$\rho$ is supported on~$X$.
Note that~$\rho$ is in~$\sM_{\sigma}$.
Taking a subsequence if necessary, assume~$(\varepsilon_\ell)_{\ell
  \in \N}$ converges to a number~$\varepsilon$ in~$[0, \varepsilon_0]$.

Putting~$m \= \sup \{ \varphi(x) + \varepsilon \tvarphi(x) : x
\in \Sigma \}$, for every~$\nu$ in~$\sM_\sigma$ that is supported
on~$\tX \cup \tX'$ we have
\begin{equation*}
  \begin{split}
\int \varphi + \varepsilon \tvarphi \dd \rho
& =
\lim_{\ell \to + \infty} \left( \frac{h_{\rho_\ell}}{\beta_{\ell}} +
    \int \varphi + \varepsilon_{\ell} \tvarphi \dd  \rho_\ell \right)
\\ & \ge
\lim_{\ell \to + \infty} \left( \frac{h_{\nu}}{\beta_{\nu}} + \int
  \varphi + \varepsilon_{\ell} \tvarphi \dd  \nu \right)
\\ & =
\int \varphi + \varepsilon \tvarphi \dd  \nu
\\ & =
m.
\end{split}
\end{equation*}
It follows that~$\rho$ is supported on~$\tX \cup \tX'$ if~$\varepsilon
> 0$, and on~$X \cup X'$ if~$\varepsilon = 0$.
Let~$\nu'$ in~$\sM_{\sigma}$ be supported on~$\tX$ if~$\varepsilon >
0$, and on~$X$ if~$\varepsilon = 0$.
Then for every~$\ell$ we have
$$ h_{\rho_\ell} + \beta_{\ell} m
\ge
h_{\rho_\ell} + \beta_{\ell} \int \varphi + \varepsilon_{\ell}
\tvarphi \dd \rho_{\ell}
\ge
h_{\nu'} + \beta_{\ell} \int \varphi + \varepsilon_{\ell} \tvarphi \dd \nu'
=
h_{\nu'} + \beta_{\ell} m, $$
and therefore~$h_{\rho_\ell} \ge h_{\nu}$.
Since the entropy function is upper semi-continuous, it follows
that~$h_{\rho} \ge h_{\nu'}$.
Since this holds for every~$\nu'$ in~$\sM_{\sigma}$ supported on~$\tX$
if~$\varepsilon > 0$, and on~$X$ if~$\varepsilon = 0$, and since by
hypothesis
$$ \htop \left( \sigma|_{\tX} \right) > \htop \left(
   \sigma|_{\tX'} \right)
\text{ and }
\htop \left( \sigma|_X \right) > \htop \left( \sigma|_{\tX'}
\right),  $$
we conclude that~$h_\rho = \htop \left( \sigma|_{\tX} \right)$
if~$\varepsilon > 0$, and that~$h_{\rho} = \htop \left( \sigma|_X
\right)$ if~$\varepsilon = 0$.
It follows that~$\rho$ is supported on~$\tX$ in the former case, and
on~$X$ in the latter case.
This completes the proof of the lemma.
\end{proof}

\begin{lemm}
\label{l:uniform dependence of equilibria simplified}
Let~$\varphi_0 \colon \Sigma \to \R$ be a Lipschitz continuous function  and
let~$\beta_0 \ge  \beta_0' \ge 0$ be given.
Let~$U$ be a clopen subset of~$\Sigma$ and let~$\delta > 0$ be such
that for every~$\beta$ in~$[\beta_0' , \beta_0]$ and every translation invariant Gibbs measure~$\rho_0$ for the potential~$\beta\cdot \varphi_0$, we have
$$
\rho_0 (U)
\ge
1 - \delta.
$$
Then there is~$\varepsilon > 0$ such that for every Lipschitz
continuous function~$\varphi \colon \Sigma \to \R$ satisfying~$\| \varphi - \varphi_0 \|_{\Lip} \le \varepsilon$, for every~$\beta$
in~$[\beta_0' , \beta_0]$, and every translation invariant Gibbs
measure~$\rho$ for the potential $\beta\cdot \varphi$, we have
$$
\rho (U)
\ge
1 - 2\delta.
$$
\end{lemm}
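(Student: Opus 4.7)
The plan is to argue by contradiction, using a weak\textsuperscript{*} compactness argument parallel to the proof of Lemma~\ref{l:uniform dependence of equilibria}. Assume the conclusion fails. Then for each $\ell \in \N$ there are a Lipschitz continuous function~$\varphi_\ell$ with $\| \varphi_\ell - \varphi_0 \|_{\Lip} \le 1/\ell$, a real number $\beta_\ell \in [\beta_0', \beta_0]$, and a translation invariant Gibbs measure $\rho_\ell$ for the potential $\beta_\ell \cdot \varphi_\ell$ such that $\rho_\ell(U) < 1 - 2\delta$.

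Next I would pass to a convergent subsequence. Since $[\beta_0', \beta_0]$ is compact and the space of Borel probability measures on~$\Sigma$ is compact in the weak\textsuperscript{*} topology, we may assume $\beta_\ell \to \beta_*$ for some $\beta_* \in [\beta_0', \beta_0]$ and $\rho_\ell \to \rho_*$ weakly, for some probability measure $\rho_*$ on $\Sigma$. The limit $\rho_*$ lies in $\sM_\sigma$ because translation invariance is preserved under weak\textsuperscript{*} limits. Moreover,
$$
\| \beta_\ell \cdot \varphi_\ell - \beta_* \cdot \varphi_0 \|_{\Lip}
\le
\beta_\ell \| \varphi_\ell - \varphi_0 \|_{\Lip}
+ |\beta_\ell - \beta_*| \cdot \| \varphi_0 \|_{\Lip}
\xrightarrow[\ell \to + \infty]{} 0,
$$
so the sequence $(\beta_\ell \cdot \varphi_\ell)_{\ell \in \N}$ converges in~$\Lip$ to $\beta_* \cdot \varphi_0$. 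By Lemma~\ref{l:continuity of translation invariant Gibbs measures}, the measure $\rho_*$ is a translation invariant Gibbs measure for the potential $\beta_* \cdot \varphi_0$.

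Finally, since $U$ is clopen, its indicator $\1_U$ is continuous on $\Sigma$, and weak\textsuperscript{*} convergence yields
$$
\rho_*(U) = \lim_{\ell \to + \infty} \rho_\ell(U) \le 1 - 2\delta.
$$
On the other hand, $\beta_* \in [\beta_0', \beta_0]$, so the hypothesis of the lemma applied to $\rho_*$ gives $\rho_*(U) \ge 1 - \delta$. Since $\delta > 0$, this contradicts the previous inequality and completes the proof.

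The only non-routine step is the identification of $\rho_*$ as a translation invariant Gibbs measure for $\beta_* \cdot \varphi_0$; this is exactly what Lemma~\ref{l:continuity of translation invariant Gibbs measures} provides once we recast the two-parameter convergence $(\beta_\ell, \varphi_\ell) \to (\beta_*, \varphi_0)$ as a single convergence $\beta_\ell \cdot \varphi_\ell \to \beta_* \cdot \varphi_0$ in~$\Lip$. Everything else is bookkeeping with compactness and the continuity of $\1_U$.
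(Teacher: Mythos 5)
Your proof is correct and follows essentially the same route as the paper's: argue by contradiction, extract weak\textsuperscript{*} and $\beta$-limits, invoke Lemma~\ref{l:continuity of translation invariant Gibbs measures} applied to $\beta_\ell \cdot \varphi_\ell \to \beta_* \cdot \varphi_0$ in~$\Lip$, and use clopenness of~$U$ to pass $\rho_\ell(U)$ to the limit. You merely spell out details (the norm estimate and the continuity of~$\1_U$) that the paper leaves implicit.
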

\begin{proof}
By contradiction.
Let $(\varphi_\ell)_{\ell \in \N}$ be a sequence converging
to~$\varphi_0$ in~$\Lip$, and let $(\beta_\ell)_{\ell\in \N}$ be
a sequence in~$[\beta_0', \beta_0]$, such that the following holds.
For every~$\ell$ in~$\N$ there is a translation invariant Gibbs measure~$\rho_\ell$ for the potential $\beta_{\ell} \cdot \varphi_\ell$ 
such that
$$
\rho_\ell(U)
\le
1- 2\delta.
$$ 
By the compactness of the set of probability measures in the weak*
topology and Lemma~\ref{l:continuity of translation invariant Gibbs
  measures}, there is a translation invariant Gibbs measure~$\rho_0$
for the potential~$\beta \cdot \varphi_0$ such that~$\rho_0(U) \le 1 -
2\delta$, which is a contradiction.
\end{proof}

\begin{proof}[Proof of Main Lemma~\ref{t:symbolic sensitive dependence}]
Let~$(X_m^{\pm})_{m \in \N_0}$, $U^{\pm}$, $Y_m^{\pm}$,
and~$\chi_m^{\pm}$ be as in the proof of Main Lemma~\ref{t:symbolic sensitive dependence}'. 
Define inductively sequences of positive numbers~$(\beta_m)_{m \in \N}$
and~$(\varepsilon_m)_{m \in \N}$ in the same way as in part~$1$ of the
proof of Main Lemma~\ref{t:symbolic sensitive dependence}',
except that~$\beta(\uvarsigma)$ is now defined as the number~$\beta_0$ given by
Lemma~\ref{l:ground states maximize entropy reloaded} with
$$ X = X_{m - 1}^{\varsigma(m)}, X' = Y_m^{\varsigma(m)} \setminus
X_{m - 1}^{\varsigma(m)},
\tX = X_m^{\varsigma(m)}, \tX' = Y_{m + 1}^{\varsigma(m)} \setminus
X_m^{\varsigma(m)}, $$
$$ \varphi = \varphi(\uvarsigma),
\tvarphi = \chi_{m + 1}^{\varsigma(m)},
\varepsilon_0 = \varepsilon_m,
\delta = 2^{-(m + 1)},
\text{ and }
U = U^{\varsigma(m)}, $$
and that~$\varepsilon_{m + 1}$ is defined as follows:
Given~$\varsigma$ in~$\{ +, - \}$, let~$\varepsilon_{m + 1}(\varsigma)$
be the number~$\varepsilon$ given by Lemma~\ref{l:uniform dependence
  of equilibria simplified} with~$U = U^{\varsigma}$, $\delta = 2^{-m}$, and $\beta_0=\beta'_0=\beta_1$ if $m=1$ and $\beta_0=
\beta_{m}$, $\beta'_0=\beta_{m-1}$ if $m\ge 2$, and
put
$$ \varepsilon_{m + 1}
\=
\min \{ \varepsilon_m, \varepsilon_{m + 1}(+), \varepsilon_{m + 1}(-)
\}. $$
For~$\uvarsigma$ in~$\signs$ define~$\varphi(\uvarsigma)$ as in part~$1$ of the
proof of Main Lemma~\ref{t:symbolic sensitive dependence}';
the proof that~$\left( \varphi(\uvarsigma) \right)_{\uvarsigma \in \signs}$
is continuous in~$\Lip$ is the same as that in part~$2$ of Main
Lemma~\ref{t:symbolic sensitive dependence}' and we omit it.

To prove the estimate of the theorem, let~$\uvarsigma = (\varsigma(m))_{m
  \in \N}$ in~$\signs$ be given, let~$m$ and~$\whm$ be integers such that
$$ \whm \ge m \ge 1
\text{ and }
\varsigma(m) = \cdots = \varsigma(\whm), $$
and fix~$\beta$ in~$[\beta_m, \beta_{\whm}]$.
Enlarging~$m$ if necessary, assume~$m$ is the largest integer~$j$ such that~$\beta_j \le \beta$.
The proof in the case~$\beta = \beta_m$ is similar to the proof of Main Lemma~\ref{t:symbolic sensitive dependence}' and we omit
it.
Suppose~$\beta > \beta_m$, and note that~$\beta < \beta_{m + 1}$,
$\whm \ge m + 1$, and~$\varsigma(m + 1) = \varsigma(m)$.
Put~$\wtuvarsigma \= (\varsigma(k))_{k = 1}^{m + 1}$ and let~$\varphi(\wtuvarsigma)$ be defined by~\eqref{e:finite level potential'} with~$\uvarsigma$ replaced by~$\wtuvarsigma$.
By our choice of~$\beta_m$ and our assumption~$\beta \ge \beta_m$, for every translation invariant Gibbs measure~$\rho_0$ for the potential $\beta \cdot \varphi(\wtuvarsigma)$  we
have
\begin{equation*}
\rho_0 (U^{\varsigma(m)})
\ge
1 - 2^{-(m + 1)}.  
\end{equation*}
On the other hand,
\begin{equation*}
  \begin{split}
\| \varphi(\uvarsigma) - \varphi(\wtuvarsigma) \|_{\Lip}
& \le
\tfrac{1}{8} \sum_{k = m + 2}^{+ \infty} \varepsilon_k \left\| \chi_k^{\varsigma(k)}
\right\|_{\Lip}
\\ & \le
\tfrac{1}{8} \sum_{k = m + 2}^{+ \infty} \varepsilon_k \left( 1 + \left\|
    \chi_{Y_k^{\varsigma(k)}} \right\|_{\Lip} \right)
\\ & \le
\tfrac{3}{8} \sum_{k = m + 2}^{+ \infty} \varepsilon_l
\\ & <
\varepsilon_{m + 2},
  \end{split}
\end{equation*}
so by our choice of~$\varepsilon_{m + 2}$ and the inequality~$\beta <
\beta_{m + 1}$, for every translation invariant Gibbs measure~$\rho$
for the potential~$\beta \cdot \varphi(\uvarsigma)$ we have
\begin{equation*}
\rho (U^{\varsigma(m)})
\ge
1 - 2^{-m}.
\end{equation*}
This completes the proof of the lemma.
\end{proof}

\appendix
\section{Zero-temperature convergence and marginal entropy}
\label{s:marginal}
In their example of a Lipschitz continuous potential whose Gibbs measures do not converge
as the temperature goes to zero,
Chazottes and Hochman considered a minimal set supporting~$2$ distinct
ergodic probability measures, see~\cite[\S\S$3$, $4.1$]{ChaHoc10}.
As explained in~\S$4.3$ of that paper, a key property of their
example is that at certain scales the marginal entropies of these measures
are sufficiently different.
They asked whether such a connection between the convergence of Gibbs
measures at zero temperature and marginal entropies exists in general, see Problem~\ref{p:marginal} below for a precise formulation.
The purpose of this appendix is to exhibit examples for which this is not
the case, thus answering the question of Chazottes and Hochman in the
negative.

To formulate the question of Chazottes and Hochman more precisely,
put~$F \= \{0, 1 \}$, $\Sigma = \{0, 1\}^{\N_0}$, and let~$\varphi \colon \Sigma \to
\R$ be a Lipschitz continuous potential.
Denote by~$\sM_{\sigma}(\varphi)$ the space of invariant probability measures~$\rho$ on~$\Sigma$ that are invariant
by~$\sigma$ and that maximize~$\int \varphi \dd \rho$.
For each integer~$n \ge 1$ let~$\sM_n^*$ be the set of marginal
distributions obtained by restricting a measure in~$\sM_{\sigma}(\varphi)$
to~$\{0, 1\}^n$, \emph{i.e.}, if we identify~$\{0, 1 \}^{\{0, \ldots, n - 1 \}}$
  with~$\{0, 1\}^n$ and denote by~$\pi_n \colon \Sigma \to \{0, 1 \}^n$
  canonical projection, then
$$ \sM_n^*
=
\left\{ (\pi_n)_* \mu : \mu \in \sM_{\sigma}(\varphi) \right\}. $$
Since the entropy function defined on~$\sM_n^*$ is strictly concave,
it attains its maximum at a unique point~$\mu_n^*$; put
$$ \sM_n
\=
\left\{ \mu \in \sM_{\sigma}(\varphi) : (\pi_n)_*\mu = \mu_n^* \right\}. $$
Note that if for each~$n$ we choose a measure~$\mu_n$ in~$\sM_n$, then
the set of accumulation measures of the sequence~$\left( (\pi_n)_* \mu_n \right)_{n \in \N}$ is independent of~$\left( \mu_n \right)_{n \in \N}$.
In the case the limit~$\lim_{n \to + \infty} (\pi_n)_* \mu_n$ exists,
we denote it by~$\lim_{n \to + \infty} \sM_n$.
\begin{prob}[\cite{ChaHoc10}, \S$4.3$]
\label{p:marginal}
Is the existence of~$\lim_{\beta \to + \infty}
\rho_{\beta \cdot \varphi}$ equivalent to the existence of~$\lim_{n \to + \infty} \sM_n$?
\end{prob}

Denote by~$\overline{0}$ (resp.~$\overline{1}$) the constant
sequence in~$\Sigma$ equal to~$0$ (resp.~$1$), and let~$\varphi$ be the Lipschitz continuous function given by (the proof of) Corollary~\ref{c:arbitrary accumulation} with~$X^+ = \{
\overline{0} \}$ and~$X^- = \{ \overline{1} \}$.
Then the Gibbs measures~$\left( \rho_{\beta \cdot \varphi}
\right)_{\beta > 0}$ accumulate at the same time
on~$\delta_{\overline{0}}$ and on~$\delta_{\overline{1}}$ as~$\beta
\to + \infty$, so the limit~$\lim_{\beta \to + \infty} \rho_{\beta
  \cdot \varphi}$ does not exist.
We show below that~$\lim_{n \to + \infty} \sM_n$ exists.
This answers negatively the question in Problem~\ref{p:marginal}.

To prove that~$\lim_{n \to + \infty} \sM_n$ exists, we use that
\begin{equation}
  \label{e:fixed are maximal}
\sM_{\sigma}(\varphi)
=
\left\{ \alpha_0 \delta_{\overline{0}} + \alpha_1 \delta_{\overline{1}} : \alpha_0 \ge 0, \alpha_1 \ge 0, \alpha_0 + \alpha_1 = 1 \right\}, \end{equation}
see Remark~\ref{r:arbitrary accumulation}.
Thus, if we denote by~$\overline{0}_n$ (resp.~$\overline{1}_n$) the
constant sequence in~$\{0, 1 \}^n$ equal to~$0$ (resp.~$1$), then we have
$$ \sM_n^*
=
\left\{ \alpha_0 \delta_{\overline{0}_n} + \alpha_1 \delta_{\overline{1}_n} : \alpha_0 \ge 0, \alpha_1 \ge 0, \alpha_0 + \alpha_1 = 1 \right\}, $$
and therefore
$$ \mu_n^* = \tfrac{1}{2} \left(\delta_{\overline{0}_n} +
\delta_{\overline{1}_n} \right)
\text{ and }
\sM_n = \left\{ \tfrac{1}{2} \left( \delta_{\overline{0}} +
    \delta_{\overline{1}} \right) \right\}. $$
It follows that~$\lim_{n \to + \infty} \sM_n = \tfrac{1}{2} \left( \delta_{\overline{0}} +
\delta_{\overline{1}} \right)$.

\bibliographystyle{alpha}

\end{document}